\algrenewcommand{\algorithmiccomment}[1]{\hskip1em // #1}
\newtheorem{thm}{Theorem}
\newtheorem{cor}[thm]{Corollary}
\newtheorem{prp}[thm]{Proposition}
\newtheorem{lem}[thm]{Lemma}
\theoremstyle{remark}
\newtheorem{definition}[thm]{Definition}
\newtheorem{clm}[thm]{Claim}
\newcommand{\refsec}[1]{Section~\ref{sec:#1}}
\newcommand{\refalg}[1]{Algorithm~\ref{alg:#1}}
\newcommand{\reffig}[1]{Figure~\ref{fig:#1}}
\newcommand{\refeqn}[1]{(\ref{eqn:#1})}
\newcommand{\reftbl}[1]{Table~\ref{tbl:#1}}
\newcommand{\refdef}[1]{Definition~\ref{def:#1}}
\newcommand{\refthm}[1]{Theorem~\ref{thm:#1}}
\newcommand{\reflem}[1]{Lemma~\ref{lem:#1}}
\newcommand{\refprp}[1]{Proposition~\ref{prp:#1}}
\newcommand{\refcor}[1]{Corollary~\ref{cor:#1}}
\newcommand{\refline}[1]{Line~\ref{line:#1}}
\newcommand{\R}{\mathbb R}
\newcommand{\N}{\mathbb N}
\newcommand{\C}{\mathbb C}
\newcommand{\eps}{\varepsilon}
\newcommand{\cE}{{\cal E}}
\newcommand{\cD}{{\cal D}}
\newcommand{\cS}{{\cal S}}
\newcommand{\cG}{{\cal G}}
\newcommand{\cN}{{\cal N}}
\newcommand{\cP}{{\cal P}}
\newcommand{\cB}{{\cal B}}
\newcommand{\si}{\sigma_{\mathrm i}}
\newcommand{\so}{\sigma_{\mathrm o}}
\newcommand{\norm}[1]{{\lVert #1 \rVert}}
\newcommand{\braket}[2]{{\langle#1|#2\rangle}}
\newcommand{\s}[1]{\left(#1\right)}
\newcommand{\pr}{\mathop{\mathrm{Pr}}}
\newcommand{\E}{\mathop{\mathrm{E}}}
\newcommand{\ADV} {\mathrm{Adv}}
\newcommand{\pfstart}{\begin{proof}} 
\newcommand{\pfend}{\end{proof}} 
\title{Quantum Algorithm for $k$-distinctness with Prior Knowledge on the Input}
\author{Aleksandrs Belovs\thanks{Faculty of Computing, University of Latvia, Raina bulv. 19, Riga, LV-1586, Latvia, stiboh@gmail.com.} \and Troy Lee\thanks{Centre for Quantum Technologies}}
\date{}
\begin{document}
\maketitle

\begin{abstract}
It is known that the dual of the general adversary bound can be used to build quantum query algorithms with optimal complexity. Despite this result, not many quantum algorithms have been designed this way. This paper shows another example of such algorithm.

We use the learning graph technique from~\cite{spanCert} to give a quantum algorithm for $k$-distinctness problem that runs in $o(n^{3/4})$ queries, for a fixed $k$, given some prior knowledge on the structure of the input. The best known quantum algorithm for the unconditional problem uses $O(n^{k/(k+1)})$ queries.
\end{abstract}

\section{Introduction}
This paper is a sequel of~\cite{spanCert} on applications of span programs, or, more generally, dual of the Adversary Bound, for constructing quantum query algorithms for functions with 1-certificate complexity bounded by a constant. Also, we use the computational model of a learning graph. In the aforementioned paper, a reduction of a learning graph to a quantum query algorithm was done using the notion of a span program, another computational model, proven to be equivalent to quantum query algorithms in the papers of Reichardt {\em et al.}~\cite{Reichardt10advtight, LMRS11}.

Two questions remained open from the last paper. Firstly, the logarithmic increase in the complexity for functions with non-Boolean input; and whether a learning graph that uses values of the variables to weight its arcs has more power than the one that doesn't. We fully resolve the first concern by switching from span programs to a more general notion of the dual of the adversary bound that possesses the same properties, and, thus, getting a query algorithm with the same complexity as the learning graph, up to a constant factor.

For the analysis of the second problem, we have chosen the $k$-distinctness problem for $k>2$. This is the most symmetric problem for which the knowledge of the values of variables can be important in construction of the learning graph. Let us define this problem here.

The {\em element distinctness} problem consists in computing the function $f\colon [m]^n\to\{0,1\}$ that evaluates to 1 iff there is a pair of equal elements (known as {\em collision}) in the input, i.e., $f(x_1,\dots,x_n)=1$ iff $\exists i\ne j:  x_i=x_j$. The quantum query complexity of the element distinctness problem is well understood.  It is 
known to be $\Theta(n^{2/3})$, the algorithm given by Ambainis~\cite{distinct} and the lower bound 
shown by Aaronson and Shi~\cite{distinctLower1} for the case of large alphabet size 
$\Omega(n^2)$ and by Ambainis~\cite{distinctLower2} in the general case. Even more, the lower bound $\Omega(n^{2/3})$ holds if one assumes there is either no, or exactly one collision in the input.

The {\em $k$-distinctness problem} is a direct generalization of the element distinctness problem. Given the same input, function evaluates to 1 iff there is a set of $k$ input elements that are all equal. The situation with the quantum query complexity of the $k$-distinctness problem is not so clear.  As element distinctness 
reduces to the $k$-distinctness problem by repeating each element $k-1$ times, the lower bound 
of $\Omega(n^{2/3})$ carries over to the $k$-distinctness problem (this argument is attributed 
to Aaronson in \cite{distinct}).  However, the best known algorithm requires $O(n^{k/(k+1)})$ 
quantum queries~\cite{distinct}.  

As the above reduction shows, the $k$-distinctness problem can only become more difficult 
as $k$ increases.  There is another difficulty that arises when $k>2$---this the huge diversity 
in the inputs.  For element distinctness, all inputs that are distinct are essentially the same---
they are all related by an automorphism of the function.  Similarly, without loss of generality, one 
may assume that an input which is not distinct has a unique collision, and again all such inputs are 
related by an automorphism.  When $k >2$ this is no longer the case.  For 
example, for $k=3$ inputs can differ in the number of unique elements.  

\paragraph{Main theorem}
In this paper, we show how one can fight the first difficulty, but we ignore the second one. Before explaining how we do so, let us give some additional definitions.

Let $(x_i)_{i\in[n]}$ be the input variables for the $k$-distinctness problem. Assume some subset $J\subseteq [n]$ is fixed. A subset $I\subseteq J$ is called {\em $t$-subtuple} with respect to $J$ if 
\begin{equation}
\label{eqn:tuple}
\forall i,j\in I:  x_i = x_j,\qquad \forall i\in I\; \forall j\in J\setminus I:  x_i\ne x_j\qquad\mbox{and}\qquad |I|=t,
\end{equation}
i.e., if it is a maximal subset of equal elements and it has size $t$. For the important special case $J=[n]$, we call them {\em $t$-tuples}. If $I$ is such that only the first condition of~\refeqn{tuple} is satisfied, we call it {\em subset of equal elements}.

We give a quantum algorithm for the $k$-distinctness problem that runs in $o(n^{3/4})$ queries for a fixed $k$, but with the prior knowledge on the number of $t$-tuples in the input. Using the same reduction as in \cite{distinct}, it is easy to show the complexity of this problem is $\Omega(n^{2/3})$ as well.

\begin{thm}
\label{thm:main}
Assume we know the number of $t$-tuples in the input for the $k$-distinctness problem for all $t=1,\dots, k-1$ with precision $O(\sqrt[4]{n})$. Then, the problem can be solved in $O(n^{1-2^{k-2}/(2^k-1)})$ quantum queries. The constant behind the $O$ depends on $k$, but not on $n$.
\end{thm}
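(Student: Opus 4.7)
The plan is to design a learning graph $\cG$ for $k$-distinctness whose complexity, when the numbers of $t$-tuples are known to the claimed precision, is $O(n^{1-2^{k-2}/(2^k-1)})$, and then to invoke the reduction from learning graphs to quantum query algorithms via the dual of the adversary bound alluded to in the introduction. Unlike in~\cite{spanCert}, the weights on the arcs of $\cG$ will be allowed to depend on the values of the variables loaded by those arcs, and this is the crucial source of the improvement over $n^{3/4}$.

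The learning graph has $k-1$ stages. A vertex is specified by a queried set $S\subseteq[n]$ together with a chain $I_2\subseteq I_3\subseteq\cdots\subseteq I_t$ of ``certified'' subsets of equal elements inside $S$ with $|I_j|=j$. Between stage $t$ and stage $t+1$ there are two kinds of arcs: \emph{loading} arcs that grow $S$ by one fresh index, with a weight depending on the value of the loaded element, and \emph{transition} arcs that extend the current $I_t$ by one already-loaded element equal to the elements of $I_t$. A positive input is certified upon reaching a vertex carrying $I_{k-1}$ together with one further element of $S$ forming a $k$-subtuple.

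Writing $s_t=|S|$ at the end of stage $t$, the positive complexity of the canonical path for a positive input is dominated by the $s_t-s_{t-1}$ loading arcs and the $O(1)$ transition arcs in each stage, weighted appropriately. The negative complexity is bounded by summing, over every partial chain of subsets of equal elements in the input, the inverse weight of its outgoing arcs; this is where the prior knowledge of the number $a_t$ of $t$-tuples enters. Choosing the weight of a loading arc inversely proportional to the frequency of the loaded value and optimizing the stage sizes geometrically, one obtains a recurrence whose solution corresponds to the stated exponent; the additive error $O(\sqrt[4]{n})$ in the knowledge of $a_t$ turns out to be small enough not to perturb the exponent.

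I expect the main obstacle to be the negative-complexity bound. Counting the number of candidate chains $I_2\subseteq\cdots\subseteq I_t$ in an arbitrary input in terms of $a_2,\dots,a_{k-1}$, and showing that the value-dependent weighting makes each such chain's contribution small, is the most delicate part of the argument; the fact that $a_t$ is known only up to $O(\sqrt[4]{n})$ (and not exactly) is precisely what fixes this precision in the statement.
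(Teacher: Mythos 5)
Your proposal diverges from the paper's construction in a way that leaves the central mechanism of the speedup unaccounted for. The paper's learning graph has vertices labeled \emph{only} by subsets $S\subseteq[n]$ (as required by the model's definition), and does \emph{not} carry an explicit chain $I_2\subseteq\cdots\subseteq I_t$ of certified equal-element subsets in the vertex, nor ``transition arcs'' that modify such a chain without growing $S$; every arc of a learning graph loads one new index. More importantly, tracking a single nested chain while loading $S$ is essentially the structure of the $O(n^{k/(k+1)})$ learning graph of \refsec{previous}, and there is nothing in your sketch that would break that barrier. The paper's actual source of improvement is the \emph{filtering} mechanism of \refline{deadends} in \refalg{kdist}: after the first stage, vertices containing too few large subtuples are declared dead-ends, so that the flow is concentrated on vertices that are statistically \emph{enriched} in $t$-subtuples for all $t$. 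This is what drives the speciality of a preparatory step down to $O(n/r_{l-1})$ and of a last-stage step to $O(n^2/r_{j-1})$ (\reftbl{param}), and it is these specialities---not a geometric optimization of stage sizes per se---that yield the recurrence $\rho_i-\rho_{i+1}=(\rho_{i-1}-\rho_i)/2$ and hence the exponent $1-2^{k-2}/(2^k-1)$. You assert the same exponent comes out of a recurrence, but give no derivation, and I don't see how your chain-tracking construction produces these specialities.

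Two further gaps. First, the filtering introduces serious non-uniformity in the flow: unlike the symmetric flows of~\cite{spanCert}, the flow through a vertex now depends on the values of the loaded variables, and the ratio of maximal to minimal flow accumulates with each step. The entire technical core of the paper---the notion of \emph{almost symmetric flows} (\refsec{almost}), and the concentration arguments via Azuma's inequality in Lemmas~\ref{lem:typical_deviation}, \ref{lem:integrals} and~\ref{lem:flowTypical}---exists precisely to tame this, and your proposal does not touch it. Second, you misidentify where the prior knowledge enters. It does affect the negative complexity (via \reflem{Aprim} and \refcor{negative}), but the deeper reason it is needed is in \refsec{convention}: knowing the $\ell_t$ to precision $O(\sqrt[4]{n})$ lets one fix a set $A_{\ge 1}$ and define $\vartheta_3$ so that \emph{all} positive inputs become strongly equivalent, which is the hypothesis under which a single flow construction can be symmetrized across inputs (\refprp{substitute}). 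Without this, the value-dependent flow you need cannot be defined consistently. Finally, weighting a loading arc ``inversely proportional to the frequency of the loaded value'' is not the paper's rule (it uses $w_e=\pi(E)/\sqrt{\tau(E)}$ per equivalence class, \refthm{symmetric}) and would blow up on rare values; it is not a workable substitute.
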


The precision in the formulation of the theorem can be loosened, $O(\sqrt[4]{n})$ is the most obvious value that works for all $k$'s. See \refsec{convention} for more details. Concerning the complexity of the algorithm, it is the exact one, and we do not know whether it can be improved.

\paragraph{Organization of the Paper}
The paper is organized as follows. In \refsec{prelim}, we define basic notions from quantum query complexity and probability theory. In \refsec{learning}, we define learning graphs and give a quantum algorithm for computing them. In \refsec{getting}, we develop some tools and get ready for \refsec{main}, where we prove \refthm{main}.

\section{Preliminaries}
\label{sec:prelim}
Let $[m]$ denote the set $\{1,2,\dots,m\}$ and consider a function $f\colon [m]^n\supseteq \cD \to \{0,1\}$.  We identify the set of input indices of $f$ with $[n]$. An {\em assignment} is a function $\alpha\colon [n]\supset S\to [m]$.
One should think of this function as fixing values for input variables in $S$. We say input $x=(x_i)_{i\in [n]}$ {\em agrees} with assignment $\alpha$ if $\alpha(i) = x_i$ for all $i\in S$. If $S\subseteq [n]$, by $x_S$, we denote the only assignment on $S$ that agrees with $x$. 

An assignment $\alpha$ is called a {\em $b$-certificate} for $f$ if any input from $\cD$, consistent with $\alpha$, is mapped to $b$ by $f$. The {\em certificate complexity} $C_x(f)$ of function $f$ on input $x$ is defined as the minimal size of a certificate for $f$ that agrees with $x$. The $b$-certificate complexity $C^{(b)}(f)$ is defined as $\max_{x\in f^{-1}(b)} C_x(f)$.

We use $[a,b]$ and $]a,b[$ to denote closed and open, respectively, intervals of $\R$; $\R^+$ to denote the set of non-negative reals. For the real vector space $\R^m$, we use the $\ell_\infty$-norm, $\|x \|_\infty = \max_i |x_i|$. In particular, we denote the $\ell_\infty$-ball of radius $d$ around $x$ by $\cB(x,d)$. We use $\cB(d)$ to denote the ball with center zero and radius $d$.

For the complex vector spaces $\C^m$, however, we use a more common $\ell_2$-norm, $\|x\|=\sqrt{\sum_i |x_i|^2}$.

\subsection{Adversary bound}
In this paper, we work with query complexity of quantum algorithms, i.e., we measure the complexity of a problem by the number of queries to the input the best algorithm should make. Query complexity provides a lower bound on time complexity. For many algorithms, query complexity can be analyzed easier than time complexity. For the definition of query complexity and its basic properties, a good reference is~\cite{survey}.

The adversary bound, originally introduced by Ambainis \cite{Ambainis00adversary}, is one of 
the most important lower bound techniques for quantum query complexity.  In fact, a strengthening 
of the adversary bound, known as the general 
adversary bound \cite{HoyerLeeSpalek07negativeadv}, has recently been shown to characterize 
quantum query complexity, up to 
constant factors \cite{Reichardt10advtight, LMRS11}.

What we actually use in the paper, is the dual of the general adversary bound.
It provides upper bounds on the quantum query complexity, i.e., quantum query algorithms.
Due to the same results, it also is tight. 
Despite this tight equivalence, the actual applications of this upper bound (in the form of span programs) have been limited, mostly, to formulae evaluation \cite{formulae}, and, recently, linear algebra problems~\cite{spanRank}. In~\cite{spanCert}, it was used to give a variant of an optimal algorithm for the element distinctness problem, and an algorithm for the triangle problem having better complexity than the one known before. In this paper, we provide yet another application.

The (dual of the) general adversary bound is defined as follows.
\begin{definition}
Let $f\colon  [m]^n \rightarrow \{0,1\}$ be a function.  
\begin{align}
\label{eqn:adv_primal}
\begin{aligned}
\ADV^\pm(f)=&\underset{\substack{k \in \N \\ u_{x,j} \in \C^k}}{\mathrm{minimize}}\ 
\max_x \sum_{j \in [n]} \norm{u_{x,j}}^2\\
& \mathrm{subject\  to} \sum_{\substack{j \\ x_j \ne y_j}} 
\braket{u_{x,j}}{u_{y,j}} =1 \text{ whenever } f(x) \ne f(y).
\end{aligned}
\end{align}
\end{definition}

For our application, it will be more convenient to use a different formulation of the objective 
value.
\begin{clm}
\begin{align}
\label{eqn:mean_equiv}
\begin{aligned}
\ADV^\pm(f)=&\underset{\substack{k \in \N \\ u_{x,j} \in \C^k}}{\mathrm{minimize}}\ 
\sqrt{\s{\max_{x\in f^{-1}(1)}\sum_{j \in [n]} \norm{u_{x,j}}^2} \s{\max_{y\in f^{-1}(0)}\sum_{j \in [n]} \norm{u_{y,j}}^2}}.\\
& \mathrm{subject\  to} \sum_{\substack{j \\ x_j \ne y_j}} 
\braket{u_{x,j}}{u_{y,j}} =1 \text{ whenever } f(x) \ne f(y).
\end{aligned}
\end{align}
\end{clm}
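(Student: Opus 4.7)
The plan is to show that the two optimization programs have the same optimal value by exploiting a natural rescaling symmetry of the feasibility constraint $\sum_{j:\, x_j \ne y_j} \braket{u_{x,j}}{u_{y,j}} = 1$. Write $A(u) = \max_{x \in f^{-1}(1)} \sum_j \norm{u_{x,j}}^2$ and $B(u) = \max_{y \in f^{-1}(0)} \sum_j \norm{u_{y,j}}^2$ for a feasible assignment $u$. Then Formulation~\refeqn{adv_primal} minimizes $\max(A(u), B(u))$, while Formulation~\refeqn{mean_equiv} minimizes $\sqrt{A(u) B(u)}$.

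One direction is immediate: by the AM-GM inequality, $\sqrt{A(u) B(u)} \le \max(A(u), B(u))$ for every feasible $u$, so the optimum of~\refeqn{mean_equiv} is at most the optimum of~\refeqn{adv_primal}.

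For the reverse direction, I would observe that the constraint only couples $u_{x,j}$ for $x \in f^{-1}(1)$ with $u_{y,j}$ for $y \in f^{-1}(0)$, so rescaling $u'_{x,j} = \lambda\, u_{x,j}$ for all $x \in f^{-1}(1)$ and $u'_{y,j} = \lambda^{-1} u_{y,j}$ for all $y \in f^{-1}(0)$ preserves feasibility while transforming $(A,B) \mapsto (\lambda^2 A, \lambda^{-2} B)$. Picking $\lambda^2 = \sqrt{B/A}$ balances the two quantities to the common value $\sqrt{AB}$, so $\max(A(u'), B(u')) = \sqrt{A(u) B(u)}$. Hence every feasible $u$ gives rise to a feasible $u'$ whose~\refeqn{adv_primal} objective equals the~\refeqn{mean_equiv} objective of $u$, proving that the optimum of~\refeqn{adv_primal} is at most the optimum of~\refeqn{mean_equiv}.

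The only mild subtlety is the degenerate case $A(u) = 0$ or $B(u) = 0$; but then the feasibility constraint $\braket{u_{x,j}}{u_{y,j}} = 1$ (summed over $j$) cannot be satisfied for any pair with $f(x)\ne f(y)$ unless $f$ is constant, in which case both programs are vacuously zero. Thus the rescaling step is valid whenever it is needed, and the two formulations coincide.
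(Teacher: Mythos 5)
Your proof is correct and follows essentially the same route as the paper's: one direction via the fact that the geometric mean is at most the maximum (the paper invokes AM--GM), and the other via the rescaling $u_{x,j}\mapsto\lambda u_{x,j}$, $u_{y,j}\mapsto\lambda^{-1}u_{y,j}$ that preserves the cross-term constraint and balances the positive and negative objective values. The explicit handling of the degenerate case $A=0$ or $B=0$ is a small addition the paper omits but costs nothing.
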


\begin{proof}
The objective value in Eq.~\refeqn{mean_equiv} is less than that of Eq.~\refeqn{adv_primal} by 
the inequality of arithmetic and geometric means.  For the other direction, note that the constraint is
invariant under multiplying all vectors $u_{x,j}$ where $f(x)=1$ by $c$ and all vectors 
$u_{y,j}$ where $f(y)=0$ by $c^{-1}$.  In this way we can ensure that the maximum in 
Eq.~\refeqn{adv_primal} is the same over $f^{-1}(0)$ and $f^{-1}(1)$ and so equal to the 
geometric mean.
\end{proof}

The general adversary bound characterizes bounded-error quantum 
query complexity.
\begin{thm}[\cite{Reichardt10advtight, LMRS11}]
\label{thm:adv_characterize}
Let $f$ be as above.  Then $Q_{1/4}(f) = \Theta(\ADV^\pm(f))$.
\end{thm}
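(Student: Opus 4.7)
The theorem has two directions. For the lower bound $Q_{1/4}(f) = \Omega(\ADV^{\pm}(f))$, I would use the primal formulation of the general adversary bound (whose optimum equals $\ADV^{\pm}(f)$ by strong duality applied to~\refeqn{adv_primal}): find a Hermitian adversary matrix $\Gamma$ indexed by inputs, supported on pairs $(x,y)$ with $f(x) \neq f(y)$, satisfying $\norm{\Gamma \circ D_j} \leq 1$ for every coordinate $j$, where $D_j$ has a $1$ at position $(x,y)$ iff $x_j \neq y_j$. Given any $T$-query quantum algorithm, I would then track the progress measure $W_t = \langle \alpha | M_t | \beta \rangle$, where $M_t$ is the matrix whose $(x,y)$ entry is $\Gamma_{x,y} \langle \psi_t^x | \psi_t^y \rangle$ and $\alpha, \beta$ are the principal left/right singular vectors of $\Gamma$. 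The three standard estimates are: $W_0 = \norm{\Gamma}$ (before any query, all algorithm states agree); each query changes $W_t$ by at most $2\max_j \norm{\Gamma \circ D_j} \leq 2$ (a query only affects inner products at positions where $x,y$ disagree at the queried coordinate); and bounded-error final measurement forces $W_T \leq \norm{\Gamma}/2 + O(1)$. Chaining these yields $T = \Omega(\norm{\Gamma}) = \Omega(\ADV^{\pm}(f))$.

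For the upper bound $Q_{1/4}(f) = O(\ADV^{\pm}(f))$, the plan is to convert an optimal feasible solution $\{u_{x,j}\}$ of~\refeqn{adv_primal} with objective $W = \ADV^{\pm}(f)$ into a quantum algorithm. Following Reichardt's route, I would assemble from $\{u_{x,j}\}$ a span program $P$ whose input-free generators are indexed by pairs $(j,a) \in [n]\times[m]$, whose target vector is fixed, and whose availability structure depends on the queried oracle in such a way that positive witnesses for $1$-inputs and negative witnesses for $0$-inputs can be built from the $u_{x,j}$'s with both witness sizes bounded by $O(W)$. I would then evaluate $P$ by running phase estimation on the bipartite walk unitary $U_P = (2\Pi_A - I)(2\Pi_B - I)$ built from the input-independent and input-dependent projectors of $P$, at precision $\Theta(1/W)$. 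Each iteration of the walk uses one query to $x$, so the total query count is $O(W) = O(\ADV^{\pm}(f))$.

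The main obstacle, as usual for this result, is justifying that phase estimation at precision $\Theta(1/W)$ actually distinguishes $f(x)=1$ from $f(x)=0$ with constant probability. This reduces to the \emph{effective spectral gap lemma}: when $f(x)=1$, the chosen starting state has constant overlap with the $+1$-eigenspace of $U_P$, while when $f(x)=0$ its overlap with the span of eigenvectors of $U_P$ whose phases lie in $[-\Theta(1/W), \Theta(1/W)]$ is bounded away from $1$. Proving this gap statement quantitatively in terms of the positive and negative witness sizes of $P$, and beforehand assembling the span program from the SDP vectors $\{u_{x,j}\}$ so that both witness sizes simultaneously match the geometric-mean objective of~\refeqn{mean_equiv}, is the technical heart of the Reichardt and LMRS arguments and is where I expect to spend almost all of the effort.
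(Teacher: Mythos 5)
This theorem is imported from \cite{Reichardt10advtight, LMRS11} and the paper gives no proof of its own, so there is nothing internal to compare against; your outline is a faithful sketch of the standard argument from those references (adversary-matrix progress measure for the lower bound, span-program evaluation via phase estimation and the effective spectral gap lemma for the upper bound). As you yourself note, the real work lies in the spectral gap analysis and in building the span program so that both witness sizes match the objective of Eq.~\refeqn{mean_equiv}, which is exactly what the cited papers supply.
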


\subsection{Martingales and Azuma's Inequality}
We assume the reader is familiar with basic notions of probability theory. In this section, we state some concentration results we will need in the proof of \refthm{main}. The results are rather standard, can be found, e.g., in~\cite{probabilistic}.

A {\em martingale} is a sequence $X_0,\dots,X_m$ of random variables such that $\E[X_{i+1}\mid X_0,\dots,X_i]=X_i$, for all $i$'s.

\begin{thm}[Azuma's Inequality]
\label{thm:azuma}
Let $0=X_0,\dots,X_m$ be a martingale such that $|X_{i+1}-X_i|\le 1$ for all $i$'s. Then
\[
\pr[X_m>\lambda\sqrt{m}]<e^{-\lambda^2/2}.
\]
for all $\lambda>0$.
\end{thm}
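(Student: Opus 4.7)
The plan is to prove this via the standard exponential (Chernoff-style) method applied to martingales. First I would introduce the martingale differences $Y_i = X_i - X_{i-1}$, so that $|Y_i|\le 1$, $\E[Y_i \mid X_0,\dots,X_{i-1}] = 0$, and $X_m = \sum_{i=1}^m Y_i$. Then, for any $t>0$, Markov's inequality applied to $e^{tX_m}$ gives
\[
\pr[X_m > \lambda\sqrt{m}] \;\le\; e^{-t\lambda\sqrt{m}} \, \E[e^{tX_m}].
\]
The aim is to get a clean upper bound on $\E[e^{tX_m}]$ and then optimize over $t$.

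The key auxiliary estimate is the following: if $Y$ is a random variable with $|Y|\le 1$ and $\E[Y]=0$, then $\E[e^{tY}] \le e^{t^2/2}$. I would derive this from convexity of the exponential by writing $y = \tfrac{1-y}{2}(-1) + \tfrac{1+y}{2}(1)$ for $y \in [-1,1]$, so that $e^{ty} \le \tfrac{1-y}{2}e^{-t} + \tfrac{1+y}{2}e^t$; taking expectations gives $\E[e^{tY}] \le \cosh(t)$, and the Taylor-series bound $\cosh(t) \le e^{t^2/2}$ finishes it.

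Next, I would apply this conditionally to each $Y_i$: conditioning on $X_0,\dots,X_{i-1}$, the variable $Y_i$ satisfies the hypotheses of the auxiliary bound, so
\[
\E[e^{tY_i} \mid X_0,\dots,X_{i-1}] \;\le\; e^{t^2/2}.
\]
Iterating the tower property (or equivalently, peeling off the last factor $e^{tY_m}$ from $\E[e^{tX_m}] = \E[e^{tX_{m-1}}\,\E[e^{tY_m}\mid X_0,\dots,X_{m-1}]]$) yields $\E[e^{tX_m}] \le e^{mt^2/2}$.

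Finally, combining with the Markov step gives $\pr[X_m > \lambda\sqrt{m}] \le \exp(mt^2/2 - t\lambda\sqrt{m})$, which is minimized at $t = \lambda/\sqrt{m}$ and evaluates to $e^{-\lambda^2/2}$, as required. The one place where any care is needed is the inequality $\cosh(t)\le e^{t^2/2}$, which is routine via comparing Taylor coefficients $(2k)! \ge 2^k k!$; everything else is straightforward bookkeeping with conditional expectations.
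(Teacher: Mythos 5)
Your proof is correct and is the standard Chernoff-method argument for Azuma's inequality (Hoeffding's lemma on each martingale increment, tower property, then optimize the exponent). The paper does not prove this theorem at all---it states it as a standard fact and cites~\cite{probabilistic}, which gives essentially the same argument you have written---so there is nothing in the paper to diverge from. The only cosmetic mismatch is that the Markov step naturally yields $\le e^{-\lambda^2/2}$ rather than the strict $<$ appearing in the statement; this is immaterial and both forms appear in the literature.
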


A standard way of defining martingales, known as {\em Doob martingale process}, is as follows. Assume $f(y_1,\dots,y_m)$ if a real-valued function, and there is a probability distribution $Y$ on the input sequences. The Doob martingale $D_0,\dots,D_m$ is defined as
\[
D_i = \E_{y'\in Y}[f(y')\mid \forall j\le i: y'_j=y_j]
\]
that is a random variable dependent on $y\in Y$. In particular, $D_0=\E[f]$ and $D_m=f(y)$. This is a martingale, and Azuma's inequality states $f(y)$ isn't far away from its expectation with high probability, if revealing one input variable has little effect on the expectation of the random variable.

\section{Learning graphs}
\label{sec:learning}
\subsection{Definitions}
\label{sec:def}
By \refthm{adv_characterize}, to upper bound the quantum query complexity of a function, 
it suffices to construct a feasible solution to Eq.~\refeqn{adv_primal}.  Trying to come up 
with vectors which satisfy all pairwise equality constraints, however, can be quite challenging 
even for simple functions.  

A learning graph, introduced in~\cite{spanCert}, is a computational model that aids in the
 construction of such vectors for a function 
$f\colon[m]^n\supseteq \cD\to \{0,1\}$ with boolean output.  By design, a learning graph 
ensures that the constraint \refeqn{mean_equiv} is satisfied, allowing one to focus 
on minimizing the objective value.  


\begin{definition}
A {\em learning graph} $\cG$ is a directed acyclic connected graph with vertices labeled by subsets of 
$[n]$, the input indices. It has arcs connecting vertices $S$ and $S\cup\{j\}$ only, where 
$S\subseteq[n]$ and $j\in[n]\setminus S$. Each arc $e$ is assigned a {\em weight function} 
$w_e\colon [m]^S\to \R^+$, where $S$ is the origin of $e$.
\end{definition}

A learning graph can be thought of as modeling the development of one's knowledge about the 
input during a query algorithm. Initially, nothing is known, and this is represented by the root 
labeled by $\emptyset$.  When at a vertex labeled by $S\subseteq [n]$, the values of the variables in $S$ have been learned. Following an arc $e$ connecting $S$ to $S\cup\{j\}$ can be interpreted as querying the value of variable $x_j$. We say the arc {\em loads} element $j$. When talking about vertex labeled by $S$, we call $S$ the set of {\em loaded elements}.

In order for a learning graph to compute function $f$ correctly, for any $x\in f^{-1}(1)$, there should exist a vertex of the learning graph containing a 1-certificate for $x$. We call vertices containing 
a 1-certificate {\em accepting}.

Let $e$ be a weighted arc of the learning graph from $S$ to $S\cup\{j\}$.  In the examples 
of learning graphs given in~\cite{spanCert}, it sufficed to assign $e$ a weight $w_e$ that 
depended only on the set $S$ and element $j$, but not the values learned.  Here, we follow Remark~4 of 
\cite{spanCert} and use a more general model where $w_e$ can depend both on $S$ and $j$, as well as on 
the values of the variables in $S$. We denote $w_e(x)=w_e(x_S)$. 
Although, this notation is convenient, it is important to keep in mind that values of the variables outside $S$ {\em do not} affect the value of $w_e$. The weight 0 of an arc should be thought of as the arc is missing for this particular input. 

By $\cG(x)$, we denote the instance of $\cG$ for input $x\in\cD$, i.e., $\cG(x)$ has the same vertices and arcs as $\cG$ does, only the weight of arc $e$ is a real number $w_e=w_e(x)$. Another way to think of a leaning graph, is like a collection of graphs $\cG(x)$ such that arcs from $S$ to $S\cup\{j\}$ in $\cG(x^{(1)})$ and $\cG(x^{(2)})$ have equal weight if $x^{(1)}_S = x^{(2)}_S$. The arcs with the latter property are called {\em identical}.

Arcs is the main constituent of the learning graph, and we use notation $e\in\cG$ to denote that $e$ is an arc of $\cG$. Similarly, we write $e\in\cG(x)$.

The {\em complexity} of a learning graph computing $f$ is defined as the geometrical mean of its 
{\em positive} and {\em negative complexities}. The negative complexity $\cN(\cG(y))$ for $y\in f^{-1}(0)$ is defined as $\sum_{e\in\cG(y)} w_e$. The negative complexity of the learning graph 
$\cN(\cG)$ is defined as $\max_{y\in f^{-1}(0)} \cN(\cG(y))$. In order to define positive complexity, we need one additional notion.

\begin{definition}
\label{def:flow}
The {\em flow} on $\cG(x)$ for $x\in f^{-1}(1)$ is a real-valued function $p_e$ where $e\in\cG(x)$. It has to satisfy the following properties:
\begin{itemize}
\item vertex $\emptyset$ is the only source of the flow, and it has intensity 1. In other words, the sum of $p_e$ over all $e$'s leaving $\emptyset$ is 1;
\item vertex $S$ is a sink iff it is accepting. That is, if $S\ne\emptyset$ and $S$ does not contain a 1-certificate of $x$ for $f$ then, for vertex $S$, the sum of $p_e$ over all in-coming arcs equals the sum of $p_e$ over all out-going arcs.
\end{itemize}
\end{definition}

The complexity of the flow is defined as $\sum_{e\in \cG(x)} p_e^2/w_e$, with convention $0/0=0$. The positive complexity $\cP(\cG(x))$ is defined as the smallest complexity of a flow on $\cG(x)$. The positive complexity of the learning graph 
$\cP(\cG)$ is defined as $\max_{x\in f^{-1}(1)} \cP(\cG(x))$.

We often consider a collection of flows $p$ for all $x\in f^{-1}(1)$. In this case, $p_e(x)$ denotes the flow $p_e$ in $\cG(x)$.

Let us briefly introduce some additional concepts connected with learning graphs. The $i$-th {\em step} of a learning graph is the set of all arcs ending in a vertex of cardinality $i$. If $E\subseteq \cG$ is a set of arcs, we use notation $p_{E} = \sum_{e\in E} p_e$. Usually, $E$ is a subset of a step. A special case is $p_S$ with $S$ being a vertex; it is used to denote the flow through vertex $S$, i.e., the sum of $p_e$ over all arcs ending at $S$. 

The following technical result is extracted from~\cite{spanCert}
\begin{lem}[Conditioning]
\label{lem:cond}
Suppose $V$ is a subset of vertices such that no vertex is a subset of another. Let $p_e$ be a flow from $\emptyset$ and ending at $V$ of intensity 1, $W$ be a subset of $V$ and $t = \sum_{S\in W} p_S$. Then there exists a flow $p'$ with the same properties, such that $p'_S = p_S/t$ for $S\in W$ and $p'_S=0$, otherwise. Moreover, the complexity of $p'$ is at most $1/t^2$ times the complexity of $p$.
\end{lem}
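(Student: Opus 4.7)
The plan is to construct $p'$ by first decomposing the original flow into path flows, then discarding the paths that terminate outside $W$, and finally rescaling what remains by $1/t$.

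First I would observe that, since $\cG$ is a DAG and $p$ is a nonnegative flow from the single source $\emptyset$ whose only sinks are the vertices of $V$, standard flow decomposition yields a family of nonnegative weights $\{\lambda_\pi\}$ indexed by directed paths $\pi$ from $\emptyset$ to some vertex of $V$, such that $p_e = \sum_{\pi\ni e}\lambda_\pi$ for every arc $e$. The antichain assumption on $V$ is what makes the endpoint of each $\pi$ unambiguous: no path used in the decomposition passes through two vertices of $V$, because one would have to be contained in the other. For $S \in V$, the total weight of paths ending at $S$ is exactly $p_S$, and $\sum_{S\in V} p_S = 1$.

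Next I would define $p'$ as the flow associated with the path decomposition in which the paths ending in $V\setminus W$ are deleted and the remaining ones are rescaled by $1/t$; concretely, $p'_e = \tfrac{1}{t}\sum_{\pi\ni e,\, \mathrm{end}(\pi)\in W}\lambda_\pi$. Conservation at every internal vertex, as well as the sink/source structure, is inherited from the path decomposition: the only source is $\emptyset$ with total outflow $\tfrac{1}{t}\sum_{S\in W}p_S = 1$, and for each $S\in W$ the inflow is $p_S/t$, while no flow reaches any vertex of $V\setminus W$. Hence $p'$ satisfies all the requirements of \refdef{flow} relative to the modified sink set.

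Finally, for the complexity bound I would use the pointwise inequality that falls out of the construction: on every arc $e$,
\[
p'_e \;=\; \frac{1}{t}\sum_{\pi\ni e,\,\mathrm{end}(\pi)\in W}\lambda_\pi \;\le\; \frac{1}{t}\sum_{\pi\ni e}\lambda_\pi \;=\; \frac{p_e}{t}.
\]
Squaring and summing gives $\sum_e (p'_e)^2/w_e \le (1/t^2)\sum_e p_e^2/w_e$, which is the claimed bound. The only mildly delicate point is the initial flow decomposition into path flows; once that is in hand, the rest is routine bookkeeping, and the antichain hypothesis on $V$ is precisely what ensures each path has a well-defined endpoint in $V$ so that the split into ``ends in $W$'' versus ``ends in $V\setminus W$'' makes sense.
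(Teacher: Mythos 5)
Your argument is correct. The paper itself does not reproduce a proof of this lemma---it only cites it from~\cite{spanCert}---so there is no in‑text proof to compare against, but your route is the natural one: decompose the flow into path flows from $\emptyset$; observe that, since every vertex on a path in a learning graph is a subset of the next one, the antichain hypothesis on $V$ forces each path to meet $V$ in exactly its endpoint; discard the paths ending in $V\setminus W$; rescale by $1/t$; and use the pointwise bound $0\le p'_e\le p_e/t$ to get $\sum_e (p'_e)^2/w_e \le t^{-2}\sum_e p_e^2/w_e$. One implicit hypothesis deserves to be stated: the existence of a path decomposition with \emph{nonnegative} weights $\lambda_\pi$, and hence the inequality $(p'_e)^2\le p_e^2/t^2$ (and indeed $t>0$), requires $p$ to be a nonnegative flow. \refdef{flow} only says ``real-valued,'' so your proof as written quietly strengthens the hypothesis; this is harmless here because every flow the paper (and~\cite{spanCert}) actually constructs is nonnegative, but it is worth making explicit.
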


This lemma is applied as follows. One uses some construction to get a flow that ends at $V$. After that, another construction is applied to obtain a flow that starts at $W$ and ends at the proper sinks, i.e., accepting vertices. In the second flow, vertices in $V\setminus W$ are dead-ends, i.e., no flow should leave them. Then it is possible to apply \reflem{cond} to glue both parts of the flow together and get a valid flow.

\subsection{Reduction to Quantum Query Algorithms}
In this section, we prove that if there is a learning graph for $f$ of complexity $C$ then 
$Q_{1/4}(f)=O(C)$.   In the case of $f$ with non-boolean input alphabet this solves an open 
problem from~\cite{spanCert} by removing a logarithmic factor present there.
We do this by showing how a learning graph can be used to construct a solution to 
Eq.~\refeqn{mean_equiv} of the same complexity, and then appealing to 
\refthm{adv_characterize}.
\begin{thm}
If there is a learning graph for $f\colon  [m]^n \rightarrow \{0,1\}$ with complexity $C$ then 
$\ADV^\pm(f) \le C$.
\end{thm}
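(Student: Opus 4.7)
The plan is to construct vectors realizing the objective of Eq.~\refeqn{mean_equiv} with value at most $C$, and then appeal to \refthm{adv_characterize}. I introduce an orthonormal basis $\{|S,j,\alpha\rangle\}$ indexed by triples consisting of a vertex $S\subseteq[n]$, an element $j\notin S$ such that $(S,S\cup\{j\})$ is an arc of $\cG$, and an assignment $\alpha\colon S\to[m]$; call this arc $e$ and note that by the ``identical arcs'' property its weight on any input $x$ with $x_S=\alpha$ depends only on $(S,j,\alpha)$, so write it $w_{S,j,\alpha}$.

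For each $x\in f^{-1}(1)$ fix an optimal flow $\{p_e(x)\}$ on $\cG(x)$, and define
\[
u_{x,j}=\sum_{S:\,j\notin S}\frac{p_{(S,S\cup\{j\})}(x)}{\sqrt{w_{S,j,x_S}}}\,|S,j,x_S\rangle,\qquad u_{y,j}=\sum_{S:\,j\notin S}\sqrt{w_{S,j,y_S}}\,|S,j,y_S\rangle,
\]
with the convention $0/\sqrt{0}=0$ (consistent because $w_e(x)=0$ forces $p_e(x)=0$, else the flow would have infinite complexity). Immediately $\sum_j\|u_{x,j}\|^2=\sum_{e\in\cG(x)}p_e(x)^2/w_e(x)\le\cP(\cG)$ by optimality of the flow, and $\sum_j\|u_{y,j}\|^2=\sum_{e\in\cG(y)}w_e(y)\le\cN(\cG)$, so the objective value of Eq.~\refeqn{mean_equiv} will be at most $\sqrt{\cP(\cG)\cN(\cG)}=C$.

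The only nontrivial step is verifying the pairwise constraint. For $f(x)=1$ and $f(y)=0$, the basis vector $|S,j,\alpha\rangle$ appears in both $u_{x,j}$ and $u_{y,j}$ precisely when $x_S=\alpha=y_S$, and then the two factors $\sqrt{w_{S,j,\alpha}}$ cancel, leaving the contribution $p_{(S,S\cup\{j\})}(x)$. Summing over $j$ with $x_j\ne y_j$ therefore gives the total flow of $p(x)$ across every arc $(S,S\cup\{j\})$ for which $x_S=y_S$ but $x_j\ne y_j$, which is exactly the flow across the cut separating $\{S:x_S=y_S\}$ from $\{S:x_S\ne y_S\}$. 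The main point — and the only real obstacle — is that no accepting vertex lies on the equal side: a $1$-certificate contained in $S$ that is consistent with $x$ cannot also be consistent with $y$, since $f(y)=0$, so every accepting $S$ satisfies $x_S\ne y_S$. Thus the source $\emptyset$ is on the equal side and all sinks on the unequal side, and flow conservation together with \refdef{flow} shows that the flow across the cut equals the source intensity $1$. This establishes the constraint and completes the reduction.
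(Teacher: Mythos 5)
Your construction—vectors indexed by $(S,j,\alpha)$, the weight $\sqrt{w}$ on the negative side, flow-over-$\sqrt{w}$ on the positive side, cancellation in the inner product, cut argument across $\{S : x_S = y_S\}$ versus $\{S : x_S \ne y_S\}$—is exactly the proof the paper gives. You actually supply a small detail the paper leaves implicit, namely why accepting vertices must lie on the $x_S\ne y_S$ side; otherwise the two arguments coincide.
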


\begin{proof}
Let $\cG$ be the learning graph, $w_e$ be the weight function, and $p$ be the optimal flow.

We show how to construct the vectors $u_{x,j}$ satisfying \refeqn{mean_equiv} from $\cG$. Let $E_j$ be the set of arcs 
$e_{S, S \cup \{j\}}$ between $S$ and $S \cup \{j\}$ for some $S$.  
Notice that the set of $\{E_j\}_{j \in [n]}$ partition all the arcs in the graph.  If 
$e=e_{S, S \cup \{j\}}$, let $\alpha(e) \in [m]^{S}$ be an assignment of values to the set labeling 
the origin of $e$.  

The vectors $u_{x,j}$ will live in a Hilbert space $\bigoplus_{e \in E_j,\alpha(e)} H_{e, \alpha(e)}$ 
where $\alpha(e) \in [m]^{S}$ is an assignment of values to the positions in $S$.  In our case 
each $H_{e, \alpha(e)}=\C$.  Thus we think of 
$u_{x,j}=\bigoplus_{e \in E_j, \alpha(e)} u_{x,j,e,\alpha(e)}$, and now go about designing these vectors.

First of all, if $e= e_{S, S \cup \{j\}}$ then $u_{x,j,e, \alpha(e)}=0$ if 
$x_S \ne \alpha(e)$.  Otherwise,  if $f(y)=0$ then we set 
$u_{y,j,e,\alpha(e)}= \sqrt{w_e(y)}$ and if $f(x)=1$, we set $u_{x,j,e,\alpha(e)}=p_e(x)/\sqrt{w_e(x)}$.     

Let us check the objective value.  If $f(y)=0$ then we have 
\[
\sum_j \norm{u_{y,j}}^2 =\sum_j \sum_{e \in E_j} w_e(y) = \sum_{e\in \cG} w_e(y) = \cN(\cG(y)).
\]
If $f(x)=1$ then 
\[
\sum_j \norm{u_{x,j}}^2=\sum_j \sum_{e \in E_j} \frac{p_e(x)^2}{w_e(x)} = \sum_{e\in\cG} \frac{p_e(x)^2}{w_e(x)} = \cP(\cG(x)).
\]
Thus the geometric mean of these quantities it is at most $C$.  

Let us now see that the constraint is satisfied. 
\begin{align*}
\sum_{j:  x_j \ne y_j} \braket{u_{x,j}}{u_{y,j}} &= 
\sum_{j:  x_j \ne y_j} \sum_{\substack{ e_{S, S \cup \{j\}} \\ x_S=y_S}} 
\braket{u_{x,j,e,x_S}}{u_{y,j,e,x_S}} \\ 
&= \sum_j \sum_{\substack{ e_{S, S \cup \{j\}} \\ x_S=y_S, x_j \ne y_j}} \frac{p_e(x)}{\sqrt{w_e(x)}}\sqrt{w_e(y)} = \sum_j \sum_{\substack{ e_{S, S \cup \{j\}} \\ x_S=y_S, x_j \ne y_j}} p_e(x) = 1 \enspace.
\end{align*}
The second equality from the end holds because $w_e(x) = w_e(x_S) = w_e(y_S) = w_e(y)$ due to the construction of the weight function. 
To see why the last equality holds, note that the set of arcs  
$e_{S, S \cup {j}}$ where $x_S=y_S$ and $x_j \ne y_j$ is the cut induced by the vertex sets $\{S \mid x_S=y_S\}$ and $\{S\mid x_S\ne y_S\}$. Since the source is in the first set, and all the sinks are in the second set, the value of the cut is equal to the total flow which is one. \end{proof}

\section{Getting Ready}
\label{sec:getting}
This section is devoted to the analysis of the applicability of learning graphs for the $k$-distinctness problem, without constructing the actual learning graph. In \refsec{sym}, we review the tools of~\cite{spanCert} to the case when the arcs of the learning graph depend on the values of the variables. In \refsec{loose}, we make the tools of in \refsec{sym} easier to apply. In \refsec{convention}, we describe the conventions on the input variables we assume for the rest of the paper. In \refsec{almost}, we develop an important notion of almost symmetric flow that is a generalization of symmetric flow used in~\cite{spanCert}. Finally, in \refsec{previous}, we describe a learning graph that is equivalent to the previous quantum algorithm for the $k$-distinctness problem.

\subsection{Symmetries}
\label{sec:sym}
In~\cite{spanCert}, the symmetries under consideration were those of the indices of the input variables. This was sufficient because values of the variables did not affect the learning graph. In this paper, we consider a wider group of symmetries, namely $\cS_n\times \cS_m$, where $\cS$ is the full symmetric group, that in the first multiplier permutes the indices, and the second one the values of the variables, i.e., an input $x = (x_i)_{i\in[n]}$ gets mapped by $\sigma = \si\times\so$ to $\sigma x = (\so (x_{\si i}))_{i\in [n]}$. 

Let $\Sigma\subseteq \cS_n\times \cS_m$ be the symmetry group of the problem, i.e., such that $f(\sigma x) = f(x)$ for all $x\in\cD$ and  $\sigma\in\Sigma$. For the $k$-distinctness problem, $\Sigma$ equals the whole group $\cS_n\times \cS_m$.

We extend the mapping $x\mapsto\sigma x$ to assignments, as well as vertices and arc of learning graphs in an obvious way. For example, an arc $e\in \cG(x)$ from $S$ to $S\cup\{v\}$ is mapped to the arc $\sigma e\in \cG(\sigma x)$ from $\si S$ to $\si (S\cup\{ v\})$. Actually, graph $\cG(\sigma x)$ may also not contain the latter arc. To avoid such inconvenience, we assume $\cG$ is embedded into the complete graph having all possible arcs of the form $e_{S,S\cup\{v\}}$, with the unused arcs having weights and flow equal to 0. Then, it is easy to see any $\sigma\in\Sigma$ maps a valid flow on $\cG(x)$ to a valid flow on $\cG(\sigma x)$ in the sense of \refdef{flow}. Of course, the complexity of the latter can be huge, even $+\infty$, because it may have a non-zero flow through an arc having weight 0. Consider two arcs:
\begin{equation}
\label{eqn:arcs}
\text{$e_i\in \cG(x^{(i)})$ originating in $S_i$ and loading $v_i$, for $i=1,2$.}
\end{equation}
In this section, as well as in \refsec{loose}, we are going to define various equivalence relations between them, mostly, to avoid the increase in the complexity of the flow under transformations from $\Sigma$. Note, in contrary to~\cite{spanCert}, we define equivalences between arcs, not transitions, i.e., chains of arcs.

\paragraph{Equivalency}
Arcs $e_1$ and $e_2$ are called {\em equivalent} iff there exists $\sigma\in\Sigma$ such that $\si(v_1)=v_2$ and $\sigma(x^{(1)}_{S_1})=x^{(2)}_{S_2}$. It is natural to assume equivalent arcs have equal weight. We give a formal argument in \refprp{sym_best}.

Denote by $\cE$ the set of all equivalency classes of $\cG$ under this relation. Also, we use notation $\cE_i$ for all equivalency classes of step $i$ (an equivalency class is fully contained in one step, hence, this is a valid notion). If $E\in\cE$, we use notation $E(x)$ to denote the subset of arcs of $\cG(x)$ that belongs to $E$.

For the $k$-distinctness, the equivalence is characterized by the structure of the subtuples of $S$. We capture this by the {\em specification} $\beta(S)$ of the vertex, i.e., by a list of non-negative integers $(b_1,b_2,\dots,b_{k-1})$ such that $S$ contains exactly $b_t$ $t$-subtuples. (If $S$ contains a $k$- or a larger subtuple it is an accepting vertex and no arcs are leaving it). In particular, $|S|=\sum_t tb_t$. Thus, two arcs are equivalent iff the specifications of their origins are equal.

\paragraph{Strong equivalency} The first part of this section describes the equivalence relation for arcs $e_1$ and $e_2$ with respect to their weight. We would like to get a stronger equivalence that captures the flow through an arc. This kind of equivalency has already been used in~\cite{spanCert} without explicit definition. 

Arcs $e_1$ and $e_2$ from~\refeqn{arcs} are called {\em strongly equivalent} iff there exists an element $\sigma\in\Sigma$ such that 
\begin{equation}
\label{eqn:strong}
\sigma(x^{(1)})=x^{(2)},\qquad\si(S_1)=S_2\qquad\mbox{and}\qquad \si(v_1)=v_2.
\end{equation}

Again, due to symmetry, it is natural to assume the flow through strongly equivalent arcs is equal. If, for some positive inputs $x^{(1)}$ and $x^{(2)}$, there is $\sigma\in\Sigma$ such that the first condition of~\refeqn{strong} holds, the task of finding a flow for $x^{(2)}$ is reduced to finding a flow for $x^{(1)}$, that is again a corollary of \refprp{sym_best}. See also \refprp{substitute}.

\paragraph{Formal argument} We give a proof that, without loss in complexity, we may assume weight and flow is constant on equivalent and strongly equivalent arcs, respectively.
\begin{prp}
\label{prp:sym_best}
For any learning graph $\cG$, it is possible to construct a learning graph $\cG'$ and a flow $p'$ on it with the same or smaller complexity, so that equivalent arcs have the same weight and strongly equivalent arcs have the same flow through them.
\end{prp}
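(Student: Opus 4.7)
The plan is to symmetrise $\cG$ by averaging it over the symmetry group $\Sigma$. First, embed $\cG$ into the complete learning graph so that every potential arc $e_{S, S\cup\{v\}}$ is present (with weight $0$ where $\cG$ has no arc). For each $\sigma \in \Sigma$, define a transformed learning graph $\sigma\cG$ and transformed flow by
\[
w^{\sigma\cG}_e(x) = w^\cG_{\sigma^{-1} e}(\sigma^{-1} x),\qquad p^{\sigma\cG}_e(x) = p^\cG_{\sigma^{-1} e}(\sigma^{-1} x).
\]
Because $\Sigma$ preserves $f$ and hence maps $1$-certificates of $x$ to $1$-certificates of $\sigma x$, $\sigma\cG$ is again a valid learning graph computing $f$, $p^{\sigma\cG}$ is a valid flow in the sense of \refdef{flow}, and both positive and negative complexities of $\sigma\cG$ coincide with those of $\cG$.

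Next, define $\cG'$ and $p'$ by the uniform convex combinations
\[
w^{\cG'}_e(x) = \frac{1}{|\Sigma|}\sum_{\sigma\in\Sigma} w^{\sigma\cG}_e(x), \qquad p'_e(x) = \frac{1}{|\Sigma|}\sum_{\sigma\in\Sigma} p^{\sigma\cG}_e(x).
\]
Flow conservation and the sink conditions are linear, so $p'$ is a valid flow on $\cG'$. The negative complexity is linear in $w$; pushing $\max_y$ inside the convex combination and using $\max_y \cN(\sigma\cG(y)) = \cN(\cG)$ yields $\cN(\cG') \le \cN(\cG)$. For the positive complexity, invoke the standard joint convexity of the quadratic-over-linear function $(p,w)\mapsto p^2/w$ on $\R\times\R^+$: arc-by-arc Jensen's inequality gives $(p'_e(x))^2/w^{\cG'}_e(x) \le |\Sigma|^{-1}\sum_\sigma (p^{\sigma\cG}_e(x))^2/w^{\sigma\cG}_e(x)$, whence summing over $e$ and maximising over $x$ yields $\cP(\cG') \le \cP(\cG)$.

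Finally, the equivalence properties are verified by a change-of-variable trick. If $e_1 \in \cG(x^{(1)})$ and $e_2 \in \cG(x^{(2)})$ are equivalent via $\tau \in \Sigma$, substituting $\sigma \mapsto \sigma\tau$ in the sum defining $w^{\cG'}_{e_2}(x^{(2)})$ transforms it term-by-term into the sum defining $w^{\cG'}_{e_1}(x^{(1)})$; the crucial point is that each $w^\cG_e$ depends only on the restricted assignment $x_S$, so the hypothesis $\tau(x^{(1)}_{S_1}) = x^{(2)}_{S_2}$ is exactly what is needed. Strong equivalence uses the stronger hypothesis $\tau x^{(1)} = x^{(2)}$ to run the analogous argument for $p'$, which depends on the full input rather than only on its restriction.

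The principal technical issue I anticipate is the Jensen step in combination with the degenerate case $w^{\cG'}_e(x) = 0$: one must observe that any $\sigma\cG$ with finite positive complexity satisfies $p^{\sigma\cG}_e(x) = 0$ whenever $w^{\sigma\cG}_e(x) = 0$, so that every term of the averaged numerator also vanishes and the convention $0/0 = 0$ can be applied consistently instead of producing $+\infty$. Everything else is bookkeeping: checking that accepting vertices, sources, and sinks are preserved under the $\Sigma$-action, and that the weight and flow functions really depend on $(e,x_S)$ and $(e,x)$ respectively, so that the orbits under $\Sigma$ agree with the stated notions of equivalence and strong equivalence.
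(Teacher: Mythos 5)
Your proposal is correct and follows essentially the same route as the paper: embed $\cG$ in the complete learning graph, average the weight and flow functions over $\Sigma$ (your $\sigma\mapsto\sigma^{-1}$ substitution makes the two definitions literally identical), deduce symmetry of $w'$ and $p'$ by a change of variable in the sum, bound $\cN(\cG')$ by linearity, and bound $\cP(\cG')$ by a Jensen-type inequality (the paper uses a hand-weighted Jensen inequality that is equivalent to your invocation of the joint convexity of $(p,w)\mapsto p^2/w$). Your explicit handling of the $0/0$ degeneracy is a small point the paper leaves implicit, but it is correct: an optimal flow of finite complexity has $p_e(x)=0$ whenever $w_e(x)=0$, so the averaged numerator vanishes whenever the averaged denominator does.
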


\pfstart
The proof is a standard application of symmetry. Let $p$ be an optimal flow for $\cG$. We define the weights of arcs in $\cG'$ and the flow through it as follows:
\[
w_e'(\alpha) = \frac1{|\Sigma|}\sum_{\sigma\in\Sigma} w_{\sigma e}(\sigma\alpha),\qquad\text{and}\qquad p_e'(x) = \frac1{|\Sigma|}\sum_{\sigma\in\Sigma} p_{\sigma e}(\sigma x).
\]
If arcs of~\refeqn{arcs} are equivalent, there exists $\sigma'$ such that $\sigma' e_1=e_2$ and $\sigma'(x^{(1)}_{S_1})=x^{(2)}_{S_2}$. Hence,
\[
w_{e_2}'(x^{(2)}) = \frac1{|\Sigma|}\sum_{\sigma\in\Sigma} w_{\sigma e_2}(\sigma(x^{(2)}_{S_2})) = 
\frac1{|\Sigma|}\sum_{\sigma\in\Sigma} w_{\sigma\sigma' e_1}(\sigma\sigma' (x^{(1)}_{S_1})) = w_{e_1}'(x^{(1)}),
\]
since $\Sigma$ is a group. The equality of flows is proven in a same way. Let us check the complexity. For a negative input $y$, we have:
\[
\cN(\cG'(y))= \sum_{e\in\cG} \frac1{|\Sigma|}\sum_{\sigma\in\Sigma} w_{\sigma e}(\sigma y) = \frac1{|\Sigma|}\sum_{\sigma\in\Sigma}\cN(\cG( \sigma y)).
\]
Hence, for at least one $\sigma$, $\cN(\cG'(y))\le \cN(\cG(\sigma y))$. Thus, $\cN(\cG')\le \cN(\cG)$.

For the positive case, at first note that $p'$ is a valid flow, as a convex combination of valid flows. For any $x\in f^{-1}(1)$, we have
\begin{align*}
\cP(\cG'(x)) \le \sum_{e\in\cG} &\s{\frac1{|\Sigma|}\sum_{\sigma\in\Sigma} p_{\sigma e}(\sigma x)}^2 \s{\frac1{|\Sigma|}\sum_{\sigma\in\Sigma} w_{\sigma e}(\sigma x)}^{-1}\\
&\le \sum_{e\in\cG} \frac1{|\Sigma|} \sum_{\sigma\in\Sigma} \frac{p_{\sigma e}(\sigma x)^2}{w_{\sigma e}(\sigma x)}=\frac1{|\Sigma|} \sum_{\sigma\in\Sigma}\cP(\cG(\sigma x)).
\end{align*}
The second inequality follows from the Jensen's inequality for the square function
$
\s{\sum_{\sigma\in\Sigma} \gamma_\sigma z_\sigma}^2 \le \sum_{\sigma\in\Sigma} \gamma_\sigma z_\sigma^2,
$
with $\gamma_\sigma = w_{\sigma e}(\sigma x)/\s{\sum_{\sigma\in\Sigma} w_{\sigma e}(\sigma x)}$ and $z_\sigma = p_{\sigma e}(\sigma x)/\gamma_\sigma$. Due to the same argument, $\cP(\cG')\le \cP(\cG)$.
\pfend

\subsection{Loosening equivalencies}
\label{sec:loose}
Although equivalencies defined in the previous section are optimal, they are not always convenient to work with. They turn out to be too strong, that results in a vast number of equivalency classes that should be treated separately. In this section, we describe a number of ways to loosen these equivalences, thus reducing the number of classes and making them easier to work with. 

\paragraph{Equivalency} Assume the weight function is decomposed as $w_e(\alpha) = w_e(\theta(\alpha))$, where $\theta$ is some ``filter'' that captures the properties of $\alpha$ we are interested in. It is good to assume symmetry preserves $\theta$, i.e., $\theta(\alpha_1)=\theta(\alpha_2)$ implies $\theta(\sigma\alpha_1)=\theta(\sigma\alpha_2)$ for any $\sigma\in\Sigma$. Transitions $e_1$ and $e_2$ are called {\em $\theta$-equivalent} iff there exists $\sigma\in\Sigma$ such that $\si(v_1)=v_2$ and $\sigma(\theta(x^{(1)}_{S_1}))=\theta(x^{(2)}_{S_2})$. It is again natural to assume $\theta$-equivalent arcs have the same weight.
Two main examples are:
\begin{itemize}
\item[$\theta_1$,] the identity. This results in the relation from the previous section. This is the main equivalency used in \refsec{main};
\item [$\theta_2$,] mapping $\alpha$ to its domain $\cD(\alpha)$. For the $k$-distinctness, arcs $e_1$ and $e_2$ from~\refeqn{arcs} are $\theta_2$-equivalent iff $|S_1|=|S_2|$. This is the equivalency used in \cite{spanCert}.
\end{itemize}  

\paragraph{Strong equivalency} Unlike equivalency, strong equivalency turns out to be too strong for all our applications. We can weaken it by considering $\cG$ as a learning graph for function $\tilde f$ that gets as input $\tilde x$, the original input $x$ with some information removed.

More precisely, extend the output alphabet $[m]$ with a set of {\em special characters} $Q$. Let $\vartheta\colon f^{-1}(1)\to ([m]\cup Q)^n$ be the function that maps $x$ to $\tilde x$. We extend elements of $\Sigma$ to $([m]\cup Q)^n$ by assuming $\so(c)=c$, if $\si\times\so \in \Sigma$ and $c\in Q$. 

The function $\tilde f\colon ([m]\cup Q)^n\to\{0,1\}$ is defined as $\tilde f(\vartheta(x))=1$ for all $x\in f^{-1}(1)$. Let $\tilde\cG$ be the same learning graph as $\cG$ but calculating $\tilde f$. Arcs $e_1$ and $e_2$ from $\refeqn{arcs}$ are called {\em $\vartheta$-strongly equivalent} iff the corresponding arcs $\tilde e_1\in\tilde\cG(\vartheta(x^{(1)}))$ and $\tilde e_2\in\tilde\cG(\vartheta(x^{(2)}))$ are strongly equivalent. 

For this construction to work, we require a stronger definition of a 1-certificate for $\tilde f$. We say an assignment $\alpha\colon [n]\supseteq M\to [m]\cup Q$ is a {\em 1-certificate} if, for all $x\in f^{-1}(1)$ such that $\vartheta(x)_M=\alpha$, $x_M$ is a 1-certificate of $x$ in $f$. With this definition, any valid flow in $\tilde\cG(\vartheta(x))$ is simultaneously a valid flow in $\cG(x)$.

We give three examples of $\vartheta$'s.
\begin{itemize}
\item[$\vartheta_1$,] the identity. This results in the relation from the previous section. We have no example of using this equivalency;
\item [$\vartheta_2$,] the equivalency used in \cite{spanCert}. Let $Q=\{\cdot, \star\}$. For a positive input $x$, fix some $1$-certificate $\alpha$. Let $M$ be the domain of $\alpha$. The elements of $M$ are called {\em marked}. Define $\tilde x=\vartheta_2(x)$ as
\[
\tilde x_i = \begin{cases}
\star,&i\in M;\\
\cdot,&\text{otherwise.}
\end{cases}
\]
Clearly, $\vartheta_2(x)_M$ is a 1-certificate. Refer to \refsec{previous} for an example of usage of this equivalency.
\item [$\vartheta_3$,] defined in \refsec{convention}. The main equivalency used in \refsec{main}.
\end{itemize}

Again, we assume the flow through $\vartheta$-strongly equivalent arcs is equal. The equivalencies we use in the paper possess two additional symmetric properties. Firstly, if $x^{(1)},x^{(2)}\in f^{-1}(1)$ and $\sigma\in\Sigma$ are such that $\sigma(\vartheta(x^{(1)}))=\vartheta(x^{(2)})$ then, for any 1-certificate $\alpha$ of $\vartheta(x^{(1)})$, $\sigma\alpha$ is a 1-certificate of $\vartheta(x^{(2)})$. Secondly, any $\vartheta$-strong equivalency class, having non-zero flow through it, is completely contained in some $\theta$-equivalency class. 
\begin{prp}
\label{prp:substitute}
If $\theta$, $\vartheta, x^{(1)},x^{(2)}$ and a flow $p_e$ on $\tilde\cG(\vartheta(x^{(1)}))$ satisfy the above conditions and $\theta$-equivalent arcs in $\cG$ have the same weight, then $\sigma p_e$ is a valid flow on $\tilde\cG(\vartheta(x^{(2)}))$ with the same complexity (that, also, is a valid flow on $\cG(x^{(2)})$).
\end{prp}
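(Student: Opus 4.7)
The plan is to verify three things in sequence: first that $\sigma p$ is a valid flow in the sense of \refdef{flow} on $\tilde\cG(\vartheta(x^{(2)}))$; second that its complexity equals that of $p$ on $\tilde\cG(\vartheta(x^{(1)}))$; and third that any flow on $\tilde\cG(\vartheta(x^{(2)}))$ is automatically a flow on $\cG(x^{(2)})$. The translation of the flow is forced: one sets $(\sigma p)_{\sigma e}(\vartheta(x^{(2)})) = p_e(\vartheta(x^{(1)}))$ and checks that this assignment has the required properties under the group action of $\Sigma$ on the ambient complete graph of vertices $S\subseteq[n]$.

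For flow validity, I would note that $\si$ sends the arc from $S$ to $S\cup\{v\}$ to the arc from $\si S$ to $\si(S\cup\{v\})$, so it is a graph automorphism of the ambient complete graph; it fixes the root $\emptyset$, so the source of $\sigma p$ is again $\emptyset$ with the same intensity $1$. For the sinks, I would invoke the first of the two symmetric properties stated just before the proposition: if $S$ is accepting because $\vartheta(x^{(1)})_S$ is a 1-certificate, then $\vartheta(x^{(2)})_{\si S} = \sigma(\vartheta(x^{(1)})_S)$ is likewise a 1-certificate, so $\si S$ is accepting in $\tilde\cG(\vartheta(x^{(2)}))$. Conservation at non-accepting vertices is transferred because $\sigma$ maps incoming arcs of $S$ bijectively to incoming arcs of $\si S$, and similarly for outgoing arcs, preserving the equality of their $p$-sums.

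For the complexity, I would use the second symmetric property. Whenever $p_e \ne 0$, the element $\sigma \in \Sigma$ itself witnesses that $e$ and $\sigma e$ belong to the same $\vartheta$-strong equivalency class, and that class is nonempty of flow; hence by hypothesis it is contained in a single $\theta$-equivalency class. Since $\theta$-equivalent arcs carry identical weights, $w_{\sigma e}(x^{(2)}) = w_e(x^{(1)})$, so each term satisfies $(\sigma p)_{\sigma e}^2 / w_{\sigma e}(x^{(2)}) = p_e^2 / w_e(x^{(1)})$. Arcs with $p_e = 0$ contribute zero on both sides using the convention $0/0=0$, and summing over $e\in\tilde\cG$ yields equal complexities.

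Finally, to pass from $\tilde\cG(\vartheta(x^{(2)}))$ to $\cG(x^{(2)})$, I would appeal directly to the remark preceding the proposition: under the stronger definition of a 1-certificate for $\tilde f$, every accepting vertex of $\tilde\cG(\vartheta(x^{(2)}))$ is also accepting in $\cG(x^{(2)})$, so the sink condition of \refdef{flow} is preserved and $\sigma p$ remains a valid flow on $\cG(x^{(2)})$ of the same complexity. The main obstacle is really the coupling between $\theta$ and $\vartheta$: the whole argument rests on the assumption that each $\vartheta$-strong equivalency class with nonzero flow sits inside a $\theta$-equivalency class, without which one could not transport weight equalities along $\sigma$ and thus could not bound the complexity term by term; once that compatibility is in hand, the remainder is routine bookkeeping of the $\Sigma$-action.
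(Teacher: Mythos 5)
Your proof is correct. The paper states \refprp{substitute} without proof, treating it as a routine consequence of the two symmetric properties listed just before it; your write-up fills in exactly that gap. The three checks you perform (flow validity via the first symmetric property plus the bijectivity of the $\sigma$-action on the ambient complete graph, complexity preservation via the second symmetric property coupled with equality of weights on $\theta$-equivalent arcs, and the passage from $\tilde\cG$ to $\cG$ via the strengthened 1-certificate condition) are precisely the argument the paper implicitly relies on, and you correctly identify the clause ``$\vartheta$-strong class with non-zero flow lies in a single $\theta$-class'' as the load-bearing hypothesis that lets you transport weights along $\sigma$.
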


\subsection{Conventions for $k$-distinctness}
\label{sec:convention}
Strong equivalency, as defined in \refsec{sym}, turns out to be too strong for the $k$-distinctness problem, because for most of the pairs $x^{(1)},x^{(2)}\in f^{-1}(1)$ there is no $\sigma$ such that $\sigma(x^{(1)})=x^{(2)}$, and, hence, no arcs from $\cG(x^{(1)})$ and $\cG(x^{(2)})$ can be strongly equivalent, whatever $\cG$ is. We use the loosening tool of \refsec{loose} to define $\vartheta_3$ so that there always exists $\sigma$ that maps $\vartheta_3(x^{(1)})$ to $\vartheta_3(x^{(2)})$. Then, by \refprp{substitute}, defining a flow for any positive input $x$ is enough to get a flow for all positive inputs.

Let the set of special characters be $Q=\{\cdot\}$. Fix an arbitrary positive input $x$. First of all, we identify a subset $M$ of $k$ equal elements (the marked elements in terminology of $\vartheta_2$). Next, due to the condition in \refthm{main}, we may assume there are non-negative integers $\ell_1,\dots,\ell_{k-1}$ such that in any valid input (either positive, or negative) there are at least $\ell_t$ $t$-tuples and 
\[ n-\sum_{t=1}^{k-1}t\ell_t = O(\sqrt[4]{n}). \]
We arbitrary select $\ell_t$ $t$-tuples. Denote by $A_t$ the union of the selected $t$-tuples. We also use notation $A_{\ge t}$ to denote $\bigcup_{j=t}^{k-1} A_j$. We define $\tilde x = \vartheta_3(x)$ as
\[
\tilde x_i = \begin{cases}
x_i,&\text{$i\in A_{\ge1}\cup M$;}\\
\cdot,&\text{otherwise.}
\end{cases}
\]
Clearly, assignment $\tilde x_M$ is a 1-certificate. In the learning graph for $\tilde f$, defined using $\vartheta_3$, we will have $p_e(x)=0$ if the origin of $e$ has at least one $\cdot$ in $\vartheta_3(x)$. This convention assures that $\theta_1$ and $\vartheta_3$ satisfy the conditions of \refprp{substitute}. Further, we are going to ignore vertices having $\cdot$'s in them. \reffig{reduction} describes which of the defined equivalence relations imply which.

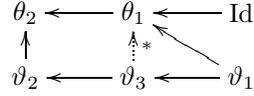
\begin{figure}[tb]
$$\xymatrix@R=10pt{
\theta_2  &\theta_1\ar[l] & \text{Id}\ar[l] \\
\vartheta_2\ar[u] & \vartheta_3\ar[l]\ar@{.>}[u]_{*} & \vartheta_1\ar[l]\ar[lu]
}$$
\caption{Relations between equivalencies. Arrows are from a more strong relation to a weaker one. Id is the identity relation from \refsec{def}, $\theta$'s are equivalencies and $\vartheta$'s are strong equivalencies. Implication from $\vartheta_3$ to $\theta_1$ only holds for arcs with non-zero flow.} \label{fig:reduction}
\end{figure}

Let us use this spot to mention one more convention on the input. Namely,
\begin{equation}
\label{eqn:lk-1}
\forall t\le k-1: \ell_t = \Omega(n).
\end{equation}
Any other case can be reduced to this one by extending the input by $n$ $t$-tuples with elements outside the range of the original problem, for each $t\le k-1$. 

The strong equivalency class (with respect to $\vartheta_3$) of an arc depends solely on the {\em types} of its initial and target vertices. The type $\tilde\beta(S)$ of vertex $S$ is an $(k-1)\times k$-matrix $(b_{t,s})$, where $b_{t,s}$ is the number of $t$-subtuples of $S$ contained in $A_s$ (or $M$, if $s=k$). Most of the time, we will implicitly assume $b_{t,k}=0$ for all $t$'s, hence, describe the type of a vertex by an $(k-1)\times(k-1)$-matrix, assuming the removed row contains only zeroes. Note also that the specification $(b_t)$ can be expressed using the type: $b_t = \sum_{s=1}^{k} b_{t,s}$. 

We will have to measure distance between types. When doing so, we treat them as vectors. I.e., the distance between types $\tilde\beta(S)=(b_{s,t})$ and $\tilde\beta(S') = (b'_{s,t})$ is defined as 
\[
\|\tilde\beta(S)-\tilde\beta(S')\|_\infty = \max_{s,t} |b_{s,t}-b'_{s,t}|.
\]

\paragraph{Negative complexity} Here we estimate how the restriction from the actual number of $t$-tuples in the input to $\ell_t$ ones in $A_{\ge1}$ affects the negative complexity. 
\begin{lem}
\label{lem:Aprim}
Consider a set $A'_{\ge 1}$ that is defined similarly to $A_{\ge 1}$, only it has $\ell_t'$ $t$-tuples. Assume $|\ell_t-\ell'_t|\le d=o(n)$ for all $t$'s. Let $(b_t)$ be any specification such that $b=\sum_t b_t=o(n)$. Then the ratio of the number of subsets satisfying $(b_t)$ in $A_{\ge 1}$ and $A'_{\ge 1}$ is at most $e^{O(db/n)}$.
\end{lem}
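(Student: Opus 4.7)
The plan is to parameterize subsets of $A_{\ge 1}$ of specification $(b_t)$ by their intersection pattern with the selected tuples, count them explicitly, and compare term by term against $A'_{\ge 1}$. Recall that $A_{\ge 1}$ is a disjoint union of selected $t'$-tuples ($t'=1,\dots,k-1$), and distinct selected tuples carry distinct values; hence a subset $S\subseteq A_{\ge 1}$ is determined by its intersections $S\cap T$ with each selected tuple $T$, and every nonempty such intersection automatically becomes a maximal subtuple of $S$. Letting $c_{t',t}$ be the number of $t'$-tuples $T$ with $|S\cap T|=t$ and setting $b^{(t')}=\sum_{t\ge 1}c_{t',t}$, the condition that $S$ has specification $(b_t)$ is exactly $\sum_{t'\ge t}c_{t',t}=b_t$ for each $t\ge 1$, giving
\[
N(\vec\ell;\vec b)=\sum_{\vec c}\;\prod_{t'=1}^{k-1}\binom{\ell_{t'}}{b^{(t')}}\binom{b^{(t')}}{c_{t',1},\dots,c_{t',t'}}\prod_{t=1}^{t'}\binom{t'}{t}^{c_{t',t}}.
\]
The analogous formula for $N(\vec\ell';\vec b)$ has an identical summation range, since the constraints on the $c_{t',t}$ with $t\ge 1$ do not involve $\vec\ell$.

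The only $\vec\ell$-dependent factors are the $\binom{\ell_{t'}}{b^{(t')}}$, so the next step is to bound, for each fixed $\vec c$, the product $\prod_{t'}\binom{\ell_{t'}}{b^{(t')}}/\binom{\ell'_{t'}}{b^{(t')}}$. Expanding each ratio as a product of $b^{(t')}$ factors $(\ell_{t'}-j)/(\ell'_{t'}-j)$ and using $|\ell_{t'}-\ell'_{t'}|\le d$ together with $\ell'_{t'}-j=\Theta(n)$ (valid thanks to \refeqn{lk-1} and $b=o(n)$), each factor is $1+O(d/n)$, so the $t'$-th ratio is bounded by $e^{O(db^{(t')}/n)}$. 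Taking the product over the $O(1)$ values of $t'$ and using $\sum_{t'}b^{(t')}=b$ yields $e^{O(db/n)}$, uniformly in $\vec c$. A uniform bound on term-wise ratios carries over to the ratio of sums, and swapping the roles of $\vec\ell$ and $\vec\ell'$ handles the reverse direction.

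The one place where the hypotheses are used non-trivially is in keeping the denominators $\ell'_{t'}-j$ of order $n$; this is precisely what the assumptions $d=o(n)$, $b=o(n)$, and \refeqn{lk-1} provide. Everything else is a standard combinatorial enumeration combined with the estimate $(1+x)^m\le e^{mx}$.
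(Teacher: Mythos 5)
Your proposal is correct and takes essentially the same route as the paper: parameterize subsets by their intersection pattern with the selected tuples (your $c_{t',t}$ is the paper's $b_{t,s}$ under an index swap, and your multinomial form is just a refactoring of the paper's telescoping binomial product), observe the summation range is independent of $\vec\ell$, bound each $\ell$-dependent factor by $e^{O(db/n)}$ using $\ell_s=\Theta(n)$ from~\refeqn{lk-1} and $b=o(n)$, and conclude by uniformity over the sum.
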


\pfstart It is straight-forward to calculate the number of subsets of $A_{\ge1}$ satisfying specification $(b_t)$. Indeed, it equals
\begin{equation}
\label{eqn:negative1}
\sum_{(b_{t,s})} \prod_{s=1}^{k-1} \left[{\ell_s \choose b_{1,s}}{s\choose 1}^{b_{1,s}}{\ell_s-b_{1,s}\choose b_{2,s}}{s\choose 2}^{b_{2,s}}\cdots{\ell_s - b_{1,s} - \cdots - b_{s-1,s}\choose b_{s,s}}{s\choose s}^{b_{s,s}} \right]
\end{equation}
where the summation is over all types $(b_{t,s})$ that agree with specification $(b_t)$. Eq.~\refeqn{negative1}, with $\ell_t$ replaced by $\ell_t'$, gives the corresponding number of subsets in $A'_{\ge 1}$. It is enough to show that each multiplier featuring $\ell_s$ in \refeqn{negative1} changes by at most a factor of $e^{O(db/n)}$. But we have:
\[{\ell'_s - b_{1,s} - \cdots - b_{t-1,s}\choose b_{t,s}}/{\ell_s - b_{1,s} - \cdots - b_{t-1,s}\choose b_{t,s}}
=\s{1+O\s{\frac dn}}^{O(b)} = e^{O(db/n)},\]
where we used that $\ell_s = \Theta(n)$, because of~\refeqn{lk-1}.
\pfend

Since the complexity mentioned in \refthm{main} is $o(n^{3/4})$, it is natural to assume no vertex of the learning graph has more elements. It's actually the case, as described in \refsec{main}. The precision $O(\sqrt[4]{n})$ in the formulation of \refthm{main} has been chosen so that restriction of the flow to $A_{\ge1}\cup M$ does not hurt the negative complexity, as it can be seen from the next

\begin{cor}
\label{cor:negative}
Fix any possible negative input $y$, and any valid specification $(b_t)$ with all entries $o(n^{3/4})$. Then, the number of subsets of $[n]$ satisfying $(b_t)$ is bounded by a constant times the number of such subsets included in $A_{\ge 1}$.
\end{cor}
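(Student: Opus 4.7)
My plan is to derive this as a direct specialization of \reflem{Aprim}, taking $A'_{\ge 1}$ to be all of $[n]$. Since $y\in f^{-1}(0)$ contains no $k$-tuple, $y$ partitions $[n]$ into $t$-tuples for $t\le k-1$; writing $\ell'_t$ for the actual number of $t$-tuples in $y$, counting subsets of $[n]$ that satisfy specification $(b_t)$ is the same as counting subsets of this $A'_{\ge 1}$ satisfying the specification. So the lemma applies verbatim, and the task reduces to controlling its two quantitative parameters $d$ and $b$.

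The parameter $d$ is supplied by the hypothesis of \refthm{main}. Together with the way $\ell_t$ is chosen in \refsec{convention}, the prior-knowledge assumption gives $|\ell_t - \ell'_t| = O(\sqrt[4]{n})$ for each $t\le k-1$, so $d = O(n^{1/4})$ is admissible. For the parameter $b = \sum_t b_t$, every entry $b_t$ is $o(n^{3/4})$ by assumption, and $k$ is a fixed constant; hence $b = o(n^{3/4})$, which in particular satisfies the lemma's requirement $b = o(n)$.

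Substituting into \reflem{Aprim}, the ratio between the two counts is at most $e^{O(db/n)}$. Since $db/n = O(n^{1/4})\cdot o(n^{3/4})/n = o(1)$, this factor equals $e^{o(1)}$, which is bounded above by an absolute constant for $n$ large enough, giving exactly the claim. There is no real obstacle here; the only conceptual point is that the corollary is really a sanity check explaining the choice of precision in \refthm{main}, namely that $O(\sqrt[4]{n})$ is calibrated to cancel against the worst admissible specification size $o(n^{3/4})$ so as to keep the exponent $db/n$ vanishing.
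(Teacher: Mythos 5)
Your proposal is correct and matches the route the paper intends: the corollary is stated as an immediate consequence of \reflem{Aprim}, specialized by taking $A'_{\ge 1}$ to be all of $[n]$ (viewed as the union of $y$'s own $t$-tuples), with $d=O(n^{1/4})$ supplied by the precision hypothesis in \refthm{main} and $b=o(n^{3/4})$ by the assumed specification bound, so that $e^{O(db/n)}=e^{o(1)}=O(1)$. The paper gives no explicit proof beyond stating the corollary right after the lemma, and your calibration remark about the $O(\sqrt[4]{n})$ precision is exactly the paper's own motivation.
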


Because of this, we may act as if the set of input variables is $A_{\ge1}\cup M$, not $[n]$.

\subsection{Almost symmetric flows}
\label{sec:almost}
Assume the following scenario. We have chosen which equivalency classes will be present in the learning graph. Also, for each positive input, we have constructed a flow. The task is to weight the arcs of the learning graph to minimize its complexity. In this section, we define a way of performing this task, if the flow satisfies some requirements.

For the $k$-distinctness problem, all arcs leaving a vertex are equivalent, hence, to specify which equivalency classes are present, it is enough to define which vertices have arcs leaving them. For each step, we define a set of {\em valid specifications}. If a vertex before the step satisfies one of them, we draw all possible arcs out of it. Otherwise, we declare it a {\em dead-end} and draw no arcs out of it. 

The flow is called {\em symmetric} in~\cite{spanCert} if, for each equivalency class, the flow through an arc of it is either 0, or $p$, where $p$ does not depend on the input, but may depend on the equivalency class; also it is required that the number of arcs having flow $p$ does not depend on the input as well. This notion was sufficient for the applications in that paper, because $\vartheta_2$-strong equivalence was used, and that is easy to handle. In this paper, we use $\vartheta_3$-strong equivalence, and it is not enough with symmetric flows. Thus, we have to generalize this notion.

\begin{definition}
The flow is called {\em almost symmetric} if, for each equivalency class $E$, there exist constants $\pi(E)$ and $\tau(E)$ such that, for each positive input $x$, there exists a subset $G(E, x)\subseteq E(x)$ such that
\begin{equation}
\label{eqn:almost}
\begin{split}
 &\tau(E)|G(E, x)| = \Theta\s{\max_{y\in f^{-1}(0)} |E(y)|},\qquad \sum_{e\in G(E, x)} p_e(x)^2 = \Omega \s{\sum\nolimits_{e\in E(x)} p_e(x)^2} \\& \qquad\qquad\mbox{and}\quad \forall e\in G(E, x): p_e(x) = \Theta(\pi(E)).
\end{split}
\end{equation}
\end{definition}
The elements inside $G(E, x)$ are called {\em typical} arcs. Number $\tau(E)$ is called the {\em speciality} of the equivalency class (as well, as of any arc in the class). We also define the {\em typical flow} through $E$ as $\mu(E) = \pi(E)\max_{x\in f^{-1}(1)} |G(E,x)|$. It is straight-forward to check that
\begin{equation}
\label{eqn:typical_flow}
\forall x\in f^{-1}(1): \mu(E) = O(p_E(x)).
\end{equation}

\begin{thm}
\label{thm:symmetric}
If the flow is almost symmetric, the learning graph can be weighted so that its complexity becomes $O\s{\sum_{E\in\cE} \mu(E)\sqrt{\tau(E)}}$.
\end{thm}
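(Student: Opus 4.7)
The plan is to weight each equivalence class $E$ uniformly (as permitted by \refprp{sym_best}) by a single number $w_E$ and show that the choice $w_E = \pi(E)/\sqrt{\tau(E)}$ makes both the positive and negative complexities of $\cG$ bounded by $O\s{\sum_{E\in\cE} \mu(E)\sqrt{\tau(E)}}$. The geometric mean of the two is then of the same order. Throughout, abbreviate $M(E) = \max_{x\in f^{-1}(1)} |G(E,x)|$, so that $\mu(E) = \pi(E) M(E)$ by definition.

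For the negative side, I would split the negative complexity by equivalence class:
\[
\cN(\cG(y)) = \sum_{E\in\cE} w_E |E(y)| \le \sum_{E\in\cE} w_E \max_{y'\in f^{-1}(0)} |E(y')|.
\]
The first condition of \refeqn{almost} says that $\max_{y'} |E(y')| = \Theta(\tau(E) |G(E,x)|)$ for every positive $x$; specializing $x$ to attain $M(E)$, this worst-case count is $\Theta(\tau(E) M(E))$. Substituting the proposed weight gives a per-class contribution of $\frac{\pi(E)}{\sqrt{\tau(E)}} \cdot \Theta(\tau(E) M(E)) = \Theta(\mu(E)\sqrt{\tau(E)})$.

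For the positive side, I would likewise split by class:
\[
\cP(\cG(x)) = \sum_{E\in\cE} \frac{1}{w_E} \sum_{e \in E(x)} p_e(x)^2.
\]
By the second condition of \refeqn{almost}, the inner sum is $O\s{\sum_{e \in G(E,x)} p_e(x)^2}$, and the third condition then bounds each summand by $O(\pi(E)^2)$, so the inner sum is $O(|G(E,x)| \pi(E)^2) \le O(M(E) \pi(E)^2)$. Substituting the weight yields a per-class contribution $\frac{\sqrt{\tau(E)}}{\pi(E)} \cdot O(M(E) \pi(E)^2) = O(\mu(E)\sqrt{\tau(E)})$.

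Summing over $E$ bounds both $\cN(\cG)$ and $\cP(\cG)$ by $O\s{\sum_{E} \mu(E)\sqrt{\tau(E)}}$, and taking the geometric mean gives the claimed complexity. The main subtlety is that the weight $w_E$ must be a single number that balances the two sides simultaneously for every positive and every negative input; this is precisely what the first condition of \refeqn{almost} buys, as it forces $|G(E,x)|$ to be of the same order for every positive $x$ and couples it to $\max_{y} |E(y)|$ on the negative side, so the same $w_E$ works across all inputs.
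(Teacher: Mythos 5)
Your proposal is correct and takes essentially the same approach as the paper: the same weight choice $w_E = \pi(E)/\sqrt{\tau(E)}$, the same split of positive and negative complexity by equivalence class, and the same use of the three conditions of~\refeqn{almost}; you merely spell out more explicitly the step where the first condition ties $\max_y|E(y)|$ to $M(E)$.
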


\pfstart
For each arc $e$ in an equivalency class $E$, we assign weight $w_e = \pi(E)/\sqrt{\tau(E)}$. Let us calculate the complexity. For each $y\in f^{-1}(0)$, we have the following negative complexity
\[ \sum_{E\in \cE} w_E |E(y)| = \sum_{E\in \cE} \frac{\pi(E)}{\sqrt{\tau(E)}} |E(y)| = O\s{ \sum_{E\in \cE} \pi(E)\sqrt{\tau(E)}\max_{x\in f^{-1}(1)} |G(E,x)| }. \]
For a positive input $x\in f^{-1}(1)$, we have
\[ \sum_{E\in \cE} \frac1{w_E} \sum_{e\in E(x)} p_e(x)^2 = O\s{\sum_{E\in \cE} \frac{\sqrt{\tau(E)}}{\pi(E)} |G(E,x)|\pi(E)^2} = O\s{\sum_{E\in\cE} \mu(E)\sqrt{\tau(E)}}.\]
By combining both estimates, we get the statement of the theorem.
\pfend

For each step $i$, define $T_i = \max_{E\in\cE_i} \tau(E)$. Then \refthm{symmetric} together with~\refeqn{typical_flow} and the observation that the total flow through all arcs on any step is at most 1, implies the following
\begin{cor}
\label{cor:symmetric}
If the flow is almost symmetric, the learning graph can be weighted so that its complexity becomes $O\s{\sum_i \sqrt{T_i}}$ where the sum is over all steps.
\end{cor}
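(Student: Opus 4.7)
The plan is to derive the corollary directly from \refthm{symmetric} by grouping the sum $\sum_{E\in\cE}\mu(E)\sqrt{\tau(E)}$ by step and controlling the inner sum using \refeqn{typical_flow} together with the flow-conservation observation stated just before the corollary. No new construction or reweighting is required; the argument is essentially bookkeeping, so I expect no real obstacle.

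First I would partition $\cE$ according to which step an equivalency class belongs to, rewriting the bound from \refthm{symmetric} as $\sum_i \sum_{E\in\cE_i}\mu(E)\sqrt{\tau(E)}$. By the definition $T_i=\max_{E\in\cE_i}\tau(E)$, each $\sqrt{\tau(E)}$ in the inner sum is at most $\sqrt{T_i}$, so I can pull this factor out and reduce the task to estimating $\sum_{E\in\cE_i}\mu(E)$.

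Next I would invoke \refeqn{typical_flow}, which gives $\mu(E)=O(p_E(x))$ for every $x\in f^{-1}(1)$. Fixing any such $x$ and summing over $E\in\cE_i$, this becomes $\sum_{E\in\cE_i}\mu(E)=O\s{\sum_{E\in\cE_i} p_E(x)}$. Since the classes in $\cE_i$ partition the arcs of step $i$ into disjoint groups and the total flow through step $i$ is at most $1$ (the source intensity is $1$ and the arcs of step $i$ form a cut separating the source from all remaining sinks), the right-hand side is $O(1)$. Combining the two ingredients yields $\sum_{E\in\cE_i}\mu(E)\sqrt{\tau(E)}=O(\sqrt{T_i})$, and summing over $i$ gives the corollary.

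The only point requiring care is the appeal to the "total flow per step is at most $1$" fact: strictly speaking, flow may be absorbed at accepting vertices before reaching step $i$, but since intensities are non-negative and the source has intensity $1$, the flow crossing step $i$ is bounded above by $1$. This is precisely the observation flagged in the paragraph preceding the corollary, so nothing new needs to be proved here.
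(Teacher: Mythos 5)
Your proposal is correct and follows exactly the derivation the paper intends: combine \refthm{symmetric} with the bound $\sqrt{\tau(E)}\le\sqrt{T_i}$ on each step, use \refeqn{typical_flow} to replace $\mu(E)$ by $O(p_E(x))$, and observe that the total flow crossing any step is at most $1$. The paper states this in a single sentence without spelling out the steps, but your bookkeeping matches it precisely.
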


\subsection{Previous Algorithm for $k$-distinctness}
\label{sec:previous}
As an example of application of \refcor{symmetric}, we briefly describe a variant of a learning graph for the $k$-distinctness problem. It is a direct analog of an algorithm from \cite{distinct} using learning graphs and a straightforward generalization of the learning graph for element distinctness from \cite{spanCert}.

To define equivalencies between arcs, we use $\theta_2$ and $\vartheta_2$ from \refsec{loose}. The learning graph consists of loading $r+k$ elements without any restrictions (as imposed by $\theta_2$), where $r$ is some parameter to be specified later. We refer to the first $r$ steps as to the {\em first stage}, and to the last $k$ steps as to the {\em second stage}.

Clearly, all arcs of the same step are equivalent. Consider strong equivalency. Let $x$ be a positive input and let $M$ be a subset of $k$ equal elements in it. We use $M$ as the set of marked elements to define $\vartheta_2$. Then, the strong equivalence class of an arc is determined by the number of elements in its origin, the number of marked elements among them, and whether the element being loaded is marked.

The flow is organized as follows. On the first stage, only arcs without marked elements are used. On the second stage, only arcs loading marked elements are used. Thus, on each step only one strong equivalency class is used, and the flow among all arcs in it is equal. 

It is easy to check this is a valid flow for $k$-distinctness and it is symmetric. Let us calculate the specialities. The first $r$ steps have speciality $O(1)$. The speciality of the $i$-th step of the second stage is $O(n^i/r^{i-1})$. This is because the fraction of $(r+i-1)$-subsets of $[n]$ containing $i-1$ marked elements is $\Theta(r^{i-1}/n^{i-1})$; and $k-i+1$ arc only, out of $\Theta(n)$ originating in such vertex, is used by the flow. Hence, by \refcor{symmetric}, the complexity of the learning graph is $O\s{r+\sqrt{n^k/r^{k-1}}}$ that is optimized when $r=n^{k/(k+1)}$ and the optimal value is $O\s{n^{k/(k+1)}}$.

\section{Algorithm for $k$-distinctness}
\label{sec:main}
The purpose of this section is to prove \refthm{main}. In \refsec{intuition}, we give some intuition behind the learning graph. In \refsec{description}, we describe the learning graph, or, more precisely, define valid specifications for each step, as described in \refsec{convention}. In \refsec{flow}, we define the flow, and give preliminary estimates of the complexity. Finally, in \refsec{analysis}, we prove the flow defined in \refsec{flow} is almost symmetric and prove the estimates therein are correct.

\subsection{Intuition behind the algorithm}
\label{sec:intuition}
There is another way to analyze the complexity of the learning graph in \refsec{previous}.

\begin{lem}
\label{lem:subtuples}
Assume convention~\refeqn{lk-1} on the input. The expected number of $t$-subtuples in an $r$-subset of $A_{\ge1}$, chosen uniformly at random, is $\Theta(r^t/n^{t-1})$.
\end{lem}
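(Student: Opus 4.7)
The plan is to combine linearity of expectation with a single hypergeometric probability calculation. First I would set $N = |A_{\ge 1}|$ and note that $N = \sum_{s=1}^{k-1} s\ell_s$; combining~\refeqn{lk-1} with the trivial bound $s\ell_s \le n$ gives $\ell_s = \Theta(n)$ for every $s \le k-1$, and hence $N = \Theta(n)$ as well.

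Next I would record the following bijective observation: a $t$-subtuple of a random $r$-subset $R \subseteq A_{\ge 1}$ is exactly a pair $(T, I)$ where $T$ is one of the selected $s$-tuples inside $A_{\ge 1}$ with $s \ge t$, $I \subseteq T$ has $|I| = t$, and $R \cap T = I$. Indeed, any $t$ positions in $R$ sharing a common value must lie inside a single chosen tuple (distinct tuples carry distinct values by definition), and then the maximality clause of~\refeqn{tuple} is precisely $R \cap T = I$. The probability of this last event depends only on $s = |T|$ and factors as
\[
\pr[R \cap T = I] = \frac{\binom{N-s}{r-t}}{\binom{N}{r}} = \frac{r(r-1)\cdots(r-t+1)\cdot(N-r)(N-r-1)\cdots(N-r-s+t+1)}{N(N-1)\cdots(N-s+1)},
\]
which, since $s, t \le k$ are absolute constants, equals $\Theta(r^t (N-r)^{s-t}/N^s)$.

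Summing over the $\ell_s \binom{s}{t} = \Theta(n)$ pairs $(T, I)$ for each $s \in \{t,\dots,k-1\}$, linearity of expectation yields the expected number of $t$-subtuples as
\[
\sum_{s=t}^{k-1} \Theta\!\left(\frac{r^t (N-r)^{s-t}}{n^{s-1}}\right).
\]
The $s = t$ summand alone is $\Theta(r^t/n^{t-1})$, which furnishes the lower bound. For $s > t$, the crude inequality $N - r \le N = \Theta(n)$ shows that each such summand is $O(r^t/n^{t-1})$, and the at most $k-1$ such terms combine into the matching upper bound.

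The only mild obstacle worth flagging is the temptation to approximate $(N-r)^{s-t}$ by $N^{s-t}$, which would be wrong once $r$ is comparable to $N$: the upper bound above avoids this by using only $N - r \le N$, while the lower bound is carried entirely by the $s = t$ term, for which the factor is absent. In particular, the estimate is uniform for all $r \in \{t, \dots, N\}$, which is convenient since the parameter $r$ will later be tuned freely.
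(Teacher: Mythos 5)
Your proof matches the paper's: both compute, via linearity of expectation, the hypergeometric probability $\binom{N-s}{r-t}/\binom{N}{r}$ that a fixed $t$-subset of a selected $s$-tuple becomes a $t$-subtuple, then multiply by the $\Theta(n)$ candidate pairs. The only difference is that you carry the factor $(N-r)^{s-t}$ explicitly and split the lower bound (from $s=t$) from the upper bound (via $N-r\le N$), whereas the paper collapses the probability directly to $\Theta(r^t/n^t)$, which tacitly uses $r=o(n)$ (true where the lemma is applied, with $r=r_1$); your version is marginally more careful and holds uniformly in $r$.
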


\pfstart
Let $S$ be the random subset. Denote $n'=|A_{\ge1}|$. The probability a fixed subset of $t$ equal elements from $A_s$ forms a $t$-subtuple in $S$ is ${n'-s\choose r-t}/{n'\choose r}=\Theta(r^t/n^t)$. The number of such subsets is $\sum_s\ell_s{s\choose t} = \Theta(n)$. Hence, by linearity of expectation, the expected number is $\Theta(r^t/n^{t-1})$.
\pfend

Consider the following informal argument. Let $M$ be the set of marked elements as in \refsec{previous}. Before the last step, the flow only goes through vertices $S$ having $|S\cap M|=k-1$. Fix a vertex $S$ and let $M'=M\cap S$. One may say, $M'$ as a $(k-1)$-subtuple, is hidden among other $(k-1)$-subtuples of $S$. The expected number of such is $\Theta(r^{k-1}/n^{k-2})$, total number of $(k-1)$-tuples is $\Theta(n)$, hence, the fraction of the vertices used by the flow on this step is $\Theta(r^{k-1}/n^{k-1})$. Thus, the speciality of the arc loading the missing marked element is $\Theta(n^{k}/r^{k-1})$ that equals the estimate in \refsec{previous}.

As such, this is just a more difficult and less strict analysis of the learning graph. But one can see that the speciality of the last steps depends on the number of $t$-subtuples in the vertices. We cannot get a large quantity of them by loading elements blindly without restrictions, but it is quite possible, we can deliberately enrich vertices of the learning graph in large subtuples by gradually filtering out vertices containing a small number of them.

\subsection{Description of the Learning graph}
\label{sec:description}
We would like to apply \refcor{symmetric}, hence, it is enough to give valid specifications for each step. We do this using a pseudo-code notation in \refalg{kdist}. 

\begin{algorithm}
\caption{Learning graph for the $k$-distinctness problem}
\label{alg:kdist}
\begin{algorithmic}[1]
\For {$j\gets 1$ \textbf{to} $r_1$} \label{line:1start}
	\State \label{line:1} Load an element
\EndFor \label{line:1end}
\State Declare as dead-ends vertices having more than $c_t r_1^t/n^{t-1}$ $t$-subtuples for any $t=2,\dots,k-1$ \label{line:deadends}
\For {$i\gets 2$ \textbf{to} $k-1$} \label{line:2start}
	\For {$j\gets 1$ \textbf{to} $r_i$}
		\For {$l\gets 1$ \textbf{to} $i$}
			\State  Load an element of level $l$ \label{line:2}
		\EndFor
	\EndFor
\EndFor \label{line:2end}
\State Load an element \label{line:last} \Comment{The last element is no subject to any constraints} 
\end{algorithmic}
\end{algorithm}

Here $r_1,\dots, r_{k-1}$ are some parameters with $r_{i+1}=o(r_i)$, $r_1 = o(n)$ and $r_{k-1} = \omega(1)$ to be specified later. Also, it will be convenient to denote $r_0 = n$. The commands of the algorithm define the specifications as follows. The loop in lines~\ref{line:1start}---\ref{line:1end} says there is no constraint on the first $r_1$ steps. \refline{deadends} introduces the {\em original specifications}. Here, $c_t>0$ are some constants we specify later. 

The loop in Lines~\ref{line:2start}---\ref{line:2end} describes how the specifications change with each step. Assume a step, described on \refline{2}, loads an element of {\em level $l$}. Then, a valid specification $(b_t)$ before the step is transformed into a valid specification $(b'_t)$ after the step as follows
\[ b'_t = \begin{cases} b_t+1,& t=l;\\ b_t-1,& t=l-1; \\ b_t,&\mbox{otherwise.} \end{cases} \]
In other words, if there is an arc between vertices of specifications $(b_t)$ and $(b'_t)$ and it load $v$ then there exists an $(l-1)$-subtuple $Q$ of $S$ such that $Q\cup\{v\}$ is an $l$-subtuple of $S\cup\{v\}$. In fact, only such arcs will be used by the flow, as it is described in more detail in \refsec{flow}.

Hence, for each specification in Lines~\ref{line:2start}---\ref{line:last}, it is possible to trace it back to its original specification. For example, if $(b_t)$ is a specification of the vertex after step in \refline{2} with the values of the loop counters $i,j$ and $l$, the original specification is given by $(\tilde b_t)-(\delta^{l}_t)$, where
\[\tilde b_t = \begin{cases} b_t-r_t,& 2\le t<i;\\ b_t-j+1,& t=i; \\ b_t,&\mbox{otherwise;} \end{cases}
\qquad\mbox{and}\qquad \delta^{l}_t=\begin{cases}1,&t=l;\\0,&\mbox{otherwise.}\end{cases}
 \]
Moreover, the use of the arcs in the flow, as described in the previous paragraph, implies the flow through all vertices having some fixed original specification is the same for all steps.

Finally, the step on \refline{last} loads the last element, and there is no need for the dead-end conditions, because after the last step all vertices have no arcs leaving them.

\paragraph{Naming convention} 
We use the following convention to name the steps of the learning graph. The step on \refline{1} is referred as the $j$-th step of the first stage. The step on \refline{2} is referred using triple $(i,j,l)$, except for the case $i=k-1$ and $j=r_{k-1}$. The latter together with the step on line \refline{last} is referred as the steps $1,2,\dots,k$ of the {\em last stage}. The steps of the form $(i,\cdot,\cdot)$ are called the {\em $i$-th stage}. Altogether, all steps of the form $(\cdot,\cdot,\cdot)$ are called the {\em preparatory phase}.

\subsection{Flow}
\label{sec:flow}
We two possible ways to define a flow. The first one is to set the flow through the arcs on each step so that the flow through all vertices on each step is the same. We believe this can be done, but we lack techniques to deal with this kind of arguments.

Instead of that, we select the second way. For each vertex, we divide the flow evenly among all possible arcs. Because of this, the ratio of the maximal and the minimal flow accumulates with each step, and at the end it is quite large. We avoid this complication by applying the concentration results stating that for large $n$'s almost all flow will be concentrated on some typical subset of arcs and will be distributed almost evenly on it.

\paragraph{First Stage} For the first stage, we use $\theta_2$- and $\vartheta_2$-based equivalencies, akin to the first stage of the flow in \refsec{previous}. Consider the uniform flow, i.e., such that distributes all the in-coming flow among all out-going arcs, leading to an element of $A_{\ge1}$, equally. Clearly, it is symmetric, and the flow through any vertex $S\subseteq A_{\ge1}$ after the first stage is ${|A_{\ge1}|\choose r_1}^{-1}$. The speciality of each step in this flow is $O(1)$ because of \refcor{negative}.

But this flow is non-zero for vertices declared as dead-ends in \refline{deadends}. We fix this by applying \reflem{cond}. We have to choose $c_t>0$ so that, with probability, say, 1/2, an uniformly picked subset of size $r_1$ satisfies a valid specification. And it is possible to do so due to \reflem{subtuples} and Markov's inequality.

After performing the conditioning, the complexity of the flow in the first stage increases by at most a constant factor (that can be ignored), and all non-dead-end vertices have the same flow through them, we denote $p_o$.

\paragraph{Preliminary Estimates} For the remaining stages, we use $\theta_1$ and $\vartheta_3$ to define equivalences between arcs.
Here we informally analyze the flow for Lines~\ref{line:2start}---\ref{line:last} of \refalg{kdist}, assuming there is flow $p_o$ through all non-dead-end vertices after \refline{deadends}. The formal analysis is done in \refsec{analysis}.

Roughly speaking, the flow is organized as follows. On step $(i,j,l)$, an element, not in $M$, belonging to level $l$ is loaded. On any step of the last stage, an element of $M$ is loaded. Let us estimate the complexity of the learning graph. Assume for the moment the flow is almost symmetric.

Approximately $n$ arcs are leaving a vertex on each step. Let $(i,j,l)$ be a step of the preparatory phase and assume $l>1$. An element of level $l$ is loaded, and there are $\Omega(r_{l-1})$ $(l-1)$-subtuples in the vertex that can be extended. Hence, $\Omega(r_{l-1})$ arcs leaving the vertex can be used by the flow. This makes the speciality of the step equal to $O(n/r_{l-1})$. This is true for $l=1$ as well, because of the convention $r_0 = n$.

Now turn to the last stage. Let us calculate the speciality of a vertex used by the flow on step $j>1$ of the last stage. Let $V_0$ be the vertices contained in $A_{\ge 1}$ having a valid specification, and $V_M$ be the vertices of $A_{\ge1}\cup M$ that can be used by the flow. Define relation $\varphi$, where $S_0\in V_0$ is in relation with $S_M\in V_M$ if $S_M$ can be obtained from $S_0$ by removing one of its $(j-1)$-subtuples and adding $j-1$ elements from $M$ instead. Each $S_0$ has $\Omega(r_{j-1})$ images and each $S_M$ has $O(n)$ preimages. Hence, $|V_0|/|V_M|=O(n/r_{j-1})$. Because only $O(1)$, out of $\Theta(n)$ arcs leaving a vertex from $V_M$, can be used by the flow, we have the speciality of step $j$ of the last stage equal to $O(n^2/r_{j-1})$. This also is true for $j=1$. All this is summarized in \reftbl{param}.

\begin{table}[htb]
\centering 
\begin{tabular}{|r|ccc|}
\hline
Step & First stage & Preparatory, $(\cdot,\cdot,l)$ & Last stage, $j$-th \\
\hline
Speciality & 1 & $n/r_{l-1}$ & $n^2/r_{j-1}$ \\
Number & $r_1$ & $r_l$ & 1 \\
\hline
\end{tabular}\caption{Parameters (up to a constant factor) of the stages of the learning graph for the $k$-distinctness problem.}
\label{tbl:param}
\end{table}

If we could apply \refcor{symmetric}, we would get the complexity
\[ O\s{r_1 + r_2\sqrt{n/r_1} + r_3\sqrt{n/r_2} + \cdots + r_{k-1}\sqrt{n/r_{k-2}} + n/\sqrt{r_{k-1}}}.\]
Denote $\rho_i = \log_n r_i$ and assume all the addends are equal. Then
\[ \frac12 + \rho_i - \frac{\rho_{i-1}}2  = \frac12 + \rho_{i+1} - \frac{\rho_i}2, \qquad i=1,\dots,k-1\]
where we assume $\rho_0 = 1$ and $\rho_k = 1/2$. It is equivalent to
$ \rho_{i} - \rho_{i+1} = (\rho_{i-1} - \rho_{i})/2$. Hence,
\[
1/2 = \rho_0 - \rho_k = (2^k-1)(\rho_{k-1} - \rho_{k}).
\]
Thus, the optimal choice of $\rho_1$ is $1-2^{k-2}/(2^k-1)$.

We use these calculations to make our choice of $r_i = n^{\rho_i}$. It remains to strictly define the flow, prove it is almost symmetric and the estimates in \reftbl{param} are correct. Before doing so, we combine some estimates on the values of $r_i$'s in the following
\begin{prp}
\label{prp:estimates}
We have $\sqrt{r_1}r_2=o(n)$ and $\sqrt{r_1}=o(r_i)$ for any $i$. Also, any valid specification on stage $i$, has $\Theta(r_j)$ $j$-subtuples for $j<i$.
\end{prp}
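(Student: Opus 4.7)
The three claims split naturally into an analytic part --- the first two estimates, from manipulation of the recursion defining $\rho_i = \log_n r_i$ --- and a combinatorial part --- the third claim, from tracking the substep dynamics of \refalg{kdist}. My plan is to treat the two parts in turn.

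For the analytic part, recall from \refsec{flow} that the $\rho_i$ are determined by $\rho_i - \rho_{i+1} = (\rho_{i-1} - \rho_i)/2$ with $\rho_0 = 1$ and $\rho_k = 1/2$. Summing the resulting geometric progression of differences gives the closed form $\rho_i = 1 - \frac{2^{k-1}(1-2^{-i})}{2^k-1}$; in particular $\rho_1 = 1 - \frac{2^{k-2}}{2^k-1} < 3/4$, and the sequence $(\rho_i)$ is strictly decreasing. The inequality $\sqrt{r_1}\,r_2 = o(n)$ is equivalent to $\rho_1/2 + \rho_2 < 1$; the recursion at $i=1$ yields $\rho_2 = (3\rho_1 - 1)/2$, which reduces it to $\rho_1 < 3/4$. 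For $\sqrt{r_1} = o(r_i)$, since $(\rho_i)$ is decreasing it suffices to show $\rho_1/2 < \rho_{k-1}$; equating the first and last addends in the complexity expression of \refsec{flow} gives $\rho_{k-1} = 2(1-\rho_1)$, so the condition becomes $\rho_1 < 4/5$, again implied by $\rho_1 < 3/4$.

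For the combinatorial claim, I analyze what each substep does to the specification. A substep loading an element of level $l$ sends $(b_t) \mapsto (b_t + \delta_{t,l} - \delta_{t,l-1})$; a complete iteration of stage $i'$ (looping $l = 1, \ldots, i'$) therefore has net effect ``increment $b_{i'}$ by $1$, leave all other coordinates unchanged'', while the first $l$ substeps of a partial iteration increment only $b_l$. Hence, after the completion of stages $2, \ldots, i-1$ and any partial progress through stage $i$, each coordinate $b_j$ with $j < i$ satisfies $b_j = \tilde b_j + r_j + O(1)$, where $(\tilde b_t)$ is the original specification at \refline{deadends}. The dead-end condition gives $\tilde b_j \le c_j r_1^j / n^{j-1}$ for $j \ge 2$, so $\tilde b_j = o(r_j)$ once $j\rho_1 - (j-1) < \rho_j$; substituting the closed form reduces this inequality to $2 - 2^{1-j} < j$, which holds for all $j \ge 2$. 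For $j = 1$, the identity $\tilde b_1 = r_1 - \sum_{t \ge 2} t\tilde b_t = r_1(1 - o(1))$ together with $b_1 = \tilde b_1$ settles the case. Combining with $r_j = \omega(1)$ (implied by $r_{k-1} = \omega(1)$) gives $b_j = \Theta(r_j)$.

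The only genuinely delicate step is the verification that $2 - 2^{1-j} < j$ for $j \ge 2$: this is the arithmetic reason why the $t$-subtuples accidentally inherited from the first stage are negligible compared to the $r_t$ deliberately produced in stage $t$, and it is precisely this fact that will justify the speciality entries of \reftbl{param} in \refsec{analysis}. Everything else is routine manipulation, either of the recursion defining the $\rho_i$ or of the local substep dynamics in \refalg{kdist}.
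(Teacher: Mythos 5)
Your proof is correct and follows essentially the same approach as the paper's: the analytic claims come from manipulating the recursion for $\rho_i$ (the paper uses the looser bounds $\rho_1 < 3/4$, $\rho_2 < 5/8$, $\rho_i \ge 1/2$ where you compute the closed form and reduce everything to $\rho_1 < 3/4$), and the combinatorial claim comes from the same tracking of substep dynamics together with the observation that the dead-end condition makes the $j$-subtuples inherited from the first stage negligible against $r_j$.
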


\pfstart
The first equation follows from $\rho_1 < 3/4$ and $\rho_2< 5/8$. The second inequality follows from $\rho_i\ge 1/2$ for all $i$'s.

Due to \refline{deadends} of the algorithm, after the first stage, any valid specification has $O(r_1^i/n^{i-1})$ $i$-subtuples. For $i>1$, it is $o(\sqrt{n})=o(r_j)$ for any $j$. Hence, after the first stage there are $\Theta(r_1)$ 1-subtuples, and this number does not substantially change after that. Similarly, if one doesn't take into account the $\pm1$-fluctuations, the number of $j$-subtuples is changed only on stage $j$, when $r_j$ $j$-subtuples are added.
\pfend

\paragraph{Values of the flow} Let us describe how the flow is defined. Fix some stage $i$. A vertex before a step of the form $(i,\cdot,1)$ is called a {\em key vertex}. Consider a key vertex $S$ with type $(b_{t,s})$. The flow from $S$ is distributed evenly among all {\em succeeding} key vertices, where $S'$ is a succeeding key vertex for $S$ iff $S'\setminus S$ is a subset of equal elements having a value different from any element of $S$. The number of succeding key vertices for $S$ is 
\[ N(S) = D_i \s{\sum\nolimits_{t=1}^{k-1} b_{t,1}, \dots, \sum\nolimits_{t=1}^{k-1} b_{t,k-1}} \]
where
\[ D_i(z_1,\dots, z_{k-1}) = \sum_{s=i}^{k-1} \s{\ell_s - z_s}{s\choose i} \]
is the number of possible $i$-subtuples to extend the vertex with, when $z_s$ $s$-tuples have already been used.

More precisely, let $e$ be an arc of step $(i,j,l)$ originating in a non-dead-end vertex $S'$ and loading an element $v$. Then the flow through this arc is defined using the values of the flow through key vertices before step $(i,j,1)$ as follows:
\begin{equation}
\label{eqn:flow_arc}
p_e = \begin{cases}
{s \choose i} {s \choose l}^{-1}\frac{p_{S'\setminus Q}}{lN(S'\setminus Q)},& \parbox{8cm}{$|Q|=l$ and $s\ge i$, where $Q=\{\iota\in S'\cup\{v\} \mid x_\iota = x_v\}$ and $s$ is such that $Q$ is contained in $A_s$;}\\
0,& \mbox{otherwise.}
\end{cases}
\end{equation}
If the first case in~\refeqn{flow_arc} holds, vertex $S'\setminus Q$ is called the key vertex {\em preceeding} arc $e$. Note that it is uniquely defined.

\subsection{Analysis of the flow}
\label{sec:analysis}
\paragraph{Typical vertices} The point of this section is to prove the flow defined in \refsec{flow} is almost symmetric. For this, we should identify the set of typical arcs. Before doing so, we define {\em typical vertices}.

Let $\beta = (b_t)$ be a valid specification of the preparatory phase. Select any $t\in[k-1]$ and let $X_t$ be the collection of all subsets of $A_{\ge 1}$ consisting of $b_t$ $t$-subtuples. In other words, elements of $X_t$ satisfy specification $(0,\dots,0,b_t,0\dots,0)$. Denote $e_{t,s} = \E_{S\in X_t} [b_{t,s}(S)]$, where $b_{t,s}(S) = |S\cap A_s|/t$ is the element of $\tilde\beta(S)$. Denote $\eps_\beta=(e_{t,s})$.

A type $(b_{t,s})$, consistent with $\beta$, is called {\em typical} if it is inside $\cB(\eps_\beta, C\sqrt{r_1})$, where $C$ is a constant to be specified later, i.e., if for all $t$ and $s$ holds $|b_{t,s}-e_{t,s}|\le C\sqrt{r_1}$. A typical vertex is one of a typical type. Let us state some properties of the typical vertices.

\begin{lem}
\label{lem:typical_mean}
Let $(b_{t,s})$ be the type of any typical vertex on the $i$-th stage. Then, for all $t<i$ and $s\ge t$, we have $b_{t,s}=\Omega(r_{t})$.
\end{lem}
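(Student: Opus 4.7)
The plan is to combine a lower bound on the mean $e_{t,s}$ with the typicality hypothesis.

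For the first step, I would show $e_{t,s} = \Omega(r_t)$ for every $t < i$ and $s \ge t$. Since we are on stage $i$, \refprp{estimates} gives $b_t = \Theta(r_t)$. A uniformly random $S \in X_t$ is in bijection with a choice of $b_t$ $t$-subtuples of $A_{\ge 1}$ having pairwise distinct parent tuples; the number of candidate $t$-subtuples in $A_{s'}$ is $N_{s'} = \ell_{s'} \binom{s'}{t}$. This count is encoded by the generating function $F(z) = \prod_{s' \ge t}(1+\binom{s'}{t} z)^{\ell_{s'}}$, so that
\[ |X_t| = [z^{b_t}] F(z), \qquad e_{t,s}\, |X_t| = [z^{b_t}] \frac{\ell_s \binom{s}{t} z}{1 + \binom{s}{t} z}\, F(z). \]
A saddle-point estimate at $\hat z \approx b_t/N$, with $N = \sum_{s'} N_{s'}$, then yields $e_{t,s} = (1+o(1))\, N_s b_t / N$. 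By convention~\refeqn{lk-1}, each $\ell_{s'} = \Theta(n)$ for $s' \le k-1$, hence $N_s/N = \Omega(1)$, giving $e_{t,s} = \Omega(b_t) = \Omega(r_t)$.

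For the second step, typicality of the vertex means by definition that $|b_{t,s} - e_{t,s}| \le C\sqrt{r_1}$. \refprp{estimates} gives $\sqrt{r_1} = o(r_t)$ for every $t$, so
\[ b_{t,s} \ge e_{t,s} - C\sqrt{r_1} = \Omega(r_t) - o(r_t) = \Omega(r_t), \]
which is the desired conclusion.

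The delicate point is the first step: a naive ``replace the dependent sampling on $X_t$ by an independent multinomial'' approximation picks up corrections of order $\exp(\Theta(b_t^2/n))$, which may be super-constant once $b_t$ approaches the regime $b_t = \Theta(n^{\rho_1})$ with $\rho_1 > 1/2$. Luckily only the one-sided bound $e_{t,s} = \Omega(r_t)$ is needed, not a sharp asymptotic. This weaker bound follows either from a dominant-term analysis of the generating function above, or alternatively from a swap/detailed-balance argument: compare the number of $S \in X_t$ whose types differ by one unit in the coordinates $(t,s)$ and $(t,s')$, and observe that the swap ratios are $\Theta(1)$ as soon as $\ell_s, \ell_{s'} = \Omega(n)$, forcing each $e_{t,s}$ to be a constant fraction of the total $\sum_{s' \ge t} e_{t,s'} = b_t$.
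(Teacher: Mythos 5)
Your main (generating-function) route is correct in outcome but heavier than what the paper needs, and your suggested alternative is not quite right as written. The paper's own argument is more direct: by linearity of expectation and symmetry over the $b_t$ subtuples, $e_{t,s} = b_t\,\Pr[s_1\subseteq A_s]$, where $s_1$ is the first subtuple of a uniformly random ordered $S\in X_t$. The number of ordered lists with $s_1\subseteq A_s$ factors as $\ell_s\binom{s}{t}$ times the number of subsets of $A_{\ge 1}$ with one $s$-tuple removed satisfying specification $(b_t-1)$; Lemma~\ref{lem:Aprim} with $d=O(1)$ and $b=O(r_t)$ then shows these counts agree for different $s\ge t$ up to a factor $e^{O(r_t/n)}=O(1)$, and since $\ell_s\binom{s}{t}=\Theta(n)$ for every $s\le k-1$ by \refeqn{lk-1} there are $O(1)$ such values of $s$, so each probability is $\Omega(1)$. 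The quantitative input you need is exactly Lemma~\ref{lem:Aprim}, already proved in the paper; you neither invoke it nor replicate it, and your saddle-point substitute is a heavier piece of machinery whose error control you only gesture at. The typicality step of your argument ($\sqrt{r_1}=o(r_t)$ via \refprp{estimates}) matches the paper's.

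Your ``swap/detailed-balance'' alternative is not sufficient as stated. Knowing only that the counts of neighbouring types differ by a multiplicative $\Theta(1)$ factor does not bound $e_{t,s}$ below by $\Omega(b_t)$: a profile with $p_{j+1}/p_j$ a constant $c<1$ has all step ratios bounded but $\E[b_{t,s}]=O(1)$, so the chain of swaps can drift to one extreme over a range of length $\Theta(b_t)$. What rescues a comparison of this flavour is the stronger fact, supplied by Lemma~\ref{lem:Aprim}, that the relevant ratios are $1+O(r_t/n)=1+o(1)$ (not merely $\Theta(1)$), or equivalently the paper's end-to-end comparison of $\Pr[s_1\subseteq A_s]$ across the finitely many admissible $s$.
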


\pfstart
Since $\sqrt{r_1} = o(r_t)$, it is enough to show that $e_{t,s} = \Omega(r_t)$. Let $S$ be an element of $X_t$. Arbitrarily order its subtuples: $S=\{s_1,\dots,s_{b_t}\}$. Clearly, the expectation is the same for ordered and unordered lists of subtuples, so let us consider the former.

By linearity of expectation, $e_{t,s} = b_t\pr[s_1\subseteq A_s]$. The number of sequences having $s_1$ in $A_s$ is
$\ell_s{s\choose t}$ times the number of ways to pick the remaining $b_t-1$ $t$-subtuples out of $A_{\ge1}$ where one $s$-tuple cannot be used. By~\refeqn{lk-1} and \reflem{Aprim}, these numbers are equal for different $s\ge t$, up to a constant factor. Hence, the probability is $\Omega(1)$, and since $b_t=\Omega(r_t)$, we have $e_{t,s} = \Omega(r_t)$.
\pfend

\begin{lem}
\label{lem:typical_deviation}
For any valid specification $\beta$ of the preparatory phase and for any $\lambda>C\sqrt{r_1}$,
\begin{equation}
\label{eqn:typical_dev}
\pr[\| \tilde\beta(S) - \eps_\beta\|_\infty>\lambda] < e^{-\Omega(\lambda^2/r_1)},
\end{equation}
where the probability is uniform over all subsets $S$ of $A_{\ge1}$ satisfying $\beta$.
\end{lem}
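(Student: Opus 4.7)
The plan is to apply \refthm{azuma} to a Doob martingale for each of the $O(k^2)$ entries of the type matrix $\tilde\beta(S)$ separately, and then take a union bound. For the Doob process to behave nicely I would first recast the uniform choice of $S$ as a sequential procedure: fix an ordering that picks $b_1$ $1$-subtuples, then $b_2$ $2$-subtuples, and so on, and let $Y_i$ be the underlying tuple of $A_{\ge t_i}$ from which the $i$-th subtuple is carved (here $t_i$ is the prescribed size at step $i$). Distinct subtuples live inside distinct underlying tuples, so $(Y_1,\dots,Y_b)$, with $b=\sum_t b_t$, is a without-replacement sequence; moreover, the particular $t_i$-element subset chosen inside the tuple $Y_i$ does not affect any $b_{t,s}$, so I can marginalize it out.

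Fix a coordinate $(t,s)$, set $f(S)=b_{t,s}(S)$, and form $D_i=\E[f\mid Y_1,\dots,Y_i]$ with $D_0=e_{t,s}$ and $D_b=f(S)$. The crux is the bounded-differences estimate $|D_i-D_{i-1}|\le 1$, which I would establish by a \emph{swap coupling}. Condition on $Y_1,\dots,Y_{i-1}$ and take two candidate values $Y_i=T$ and $Y_i=T'$. Given a completion $(Y_{i+1},\dots,Y_b)$ of the first chain, build a completion of the second chain by leaving it unchanged except that any later step $j>i$ which used $T'$ is replaced by $T$ (since $T'$ is now used up at step $i$). This defines a coupling of two valid without-replacement completions that agree everywhere except at step $i$ (where one uses $T$ and the other $T'$) and possibly at one later step $j$ where the roles of $T$ and $T'$ are swapped. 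A short case analysis on whether $t_i=t$ or $t_j=t$ shows that the two disagreements either cancel (when $t_i=t_j=t$) or add up to at most one unit in $b_{t,s}$, so $|f|$ changes by at most $1$ between the two coupled outcomes, hence $|D_i-D_{i-1}|\le 1$.

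By \refprp{estimates} together with the dead-end rule of \refline{deadends}, we have $b=\sum_t b_t=O(r_1)$ on every valid specification of the preparatory phase. Applying \refthm{azuma} to $D_i$ and its negation therefore yields
\[
\pr[|D_b-D_0|>\lambda]<2e^{-\Omega(\lambda^2/r_1)}.
\]
A union bound over the $O(k^2)$ entries of the type matrix then gives the stated bound on $\|\tilde\beta(S)-\eps_\beta\|_\infty$; the hypothesis $\lambda>C\sqrt{r_1}$ lets one absorb the $O(k^2)$ prefactor into the exponent by taking $C$ sufficiently large (for fixed $k$).

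The main obstacle is the constant bounded-differences bound $|D_i-D_{i-1}|\le 1$, and specifically checking that the swap coupling really produces two legitimate samples of the uniform without-replacement distribution given $Y_1,\dots,Y_{i-1}$. This is plausible because each $Y_i$ is drawn from a pool of $\Theta(n)$ underlying tuples while only $O(r_1)=o(n)$ have been previously picked, so removing a single picked tuple barely perturbs the remaining distribution; but making this into a clean bijection between the two conditional completions is the one non-routine step of the argument.
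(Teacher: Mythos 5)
Your plan is a genuinely different route from the paper's, and it runs into a real gap at the bounded-differences step.

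The paper factors the problem: for each fixed subtuple size $t$ it applies Proposition~\ref{prp:deviation} (an Azuma-type bound proved not by a coupling but via the decomposition~\refeqn{devX} and a direct count ratio) to the \emph{uniform} distribution on the set $S_t$ of $t$-subtuples. It then observes that, under a uniform $S$, the induced marginal on $S_t$ is \emph{not} uniform but carries weights $w_{S_t}$, which it controls by Lemma~\ref{lem:Aprim} and absorbs by Lemma~\ref{lem:integrals}. Those two lemmas are precisely the machinery whose job is to handle the non-uniformity of the reduced distribution. Your proposal runs a single Doob martingale on the whole of $S$ and tries to get bounded differences by a swap coupling; this is the step where the same non-uniformity reappears, unaddressed.

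Concretely, the law of the tuple sequence $(Y_1,\dots,Y_b)$ induced by a uniform $S$ is proportional to $\prod_i {|Y_i|\choose t_i}$: having chosen the underlying tuple $Y_i$, any of its ${|Y_i|\choose t_i}$ subsets of size $t_i$ can be the carved subtuple. Therefore, swapping a size-$s$ tuple at step $i$ with a size-$s'$ tuple at a later step $j$ multiplies this weight by ${s'\choose t_i}{s\choose t_j}/\big({s\choose t_i}{s'\choose t_j}\big)$, which is bounded away from $1$ whenever $s\ne s'$ and $t_i\ne t_j$. So your swap is not measure-preserving, hence not ``a coupling of two valid without-replacement completions.'' If one quantifies the defect, the later swap occurs with probability $\Theta(r_1/n)$ and there $f$ can sit $\Theta(r_1)$ away from its conditional mean, so the bounded-differences constant one actually obtains is $O(1+r_1^2/n)$; this is $\omega(1)$ precisely in the regime $r_1\gg\sqrt n$ the algorithm uses ($\rho_1>1/2$ for every $k\ge 2$), and Azuma then yields a tail far weaker than $e^{-\Omega(\lambda^2/r_1)}$. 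There is also a secondary problem with your chosen increasing-size order: a later step $j$ has $t_j\ge t_i$, and replacing $Y_j=T'$ by the earlier tuple $T$ needs $|T|\ge t_j$, which you only know to be $\ge t_i$; the swapped sequence may not even be valid. Revealing in decreasing size order fixes this, but not the weight issue. You rightly flag the coupling as the non-routine step, but the obstacle is not cosmetic: the tuple-sequence distribution is genuinely weighted, and controlling that weight is exactly what the paper's Lemmas~\ref{lem:Aprim} and~\ref{lem:integrals} are for.
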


We derive the lemma from the following two pure technical results
\begin{prp}
\label{prp:deviation}
Let $H$ be the disjoint union of $(H_t)_{t\in[k]}$ where each $H_t$ is a rectangular array of dots, having $\ell_t$ columns and $m_t$ rows. Let $X$ be the set of all $r$-element subsets of $H$ where no subset has more than 1 dot from any column of any $H_t$. Occupy $X$ with the uniform probability distribution and let $h_t\colon X\ni S\mapsto |S\cap H_t|$. Assume $k=O(1)$, $\ell_t = \Theta(n)$ and $r=o(n)$. Then:
\begin{equation}
\label{eqn:deviation}
\pr\left[|h_t - \E[h_t]|>\lambda\right] < e^{-\Omega(\lambda^2/r)},
\end{equation}
for any $\lambda>0$.
\end{prp}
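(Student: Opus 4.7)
My plan is to apply Azuma's inequality (\refthm{azuma}) to an exposure martingale obtained by revealing the elements of $S$ one coordinate at a time. The restriction on $\ell_t=\Theta(n)$, $r=o(n)$ and $k=O(1)$ will only be used to ensure that a coupling argument goes through with constants that do not depend on $n$.

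First I would describe a sequential sampling procedure that produces the uniform distribution on $X$: draw $s_1,\ldots,s_r$ one at a time, letting $s_{j+1}$ be uniform over the set of dots in $H$ whose column (in the $H_t$ to which it belongs) has not yet been hit by $s_1,\ldots,s_j$. By symmetry, the unordered set $\{s_1,\ldots,s_r\}$ is uniform on $X$. Then I would define the Doob martingale
\[
D_j \;=\; \E\bigl[h_t \,\bigm|\, s_1,\ldots,s_j\bigr],\qquad j=0,1,\ldots,r,
\]
so $D_0=\E[h_t]$ and $D_r=h_t(S)$.

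The heart of the argument is to show $|D_{j+1}-D_j|\le c$ for an absolute constant $c$. For this I would fix any history $s_1,\ldots,s_j$ and any two alternatives $s,s'$ for $s_{j+1}$, and couple the two distributions of the remaining $r-j-1$ draws. Passing from $s$ to $s'$ changes which single column is newly blocked (possibly across two distinct $H_t$'s), so at every subsequent step the two available pools differ by at most two dots. A greedy coupling (agree whenever possible, swap locally otherwise) produces final sets $S^s,S^{s'}$ with symmetric difference of bounded size, and hence $|h_t(S^s)-h_t(S^{s'})|=O(1)$. Averaging over $s,s'$ with the appropriate conditional weights bounds the martingale increment by the same constant. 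Rescaling the martingale by $1/c$ and applying Azuma's inequality to both $D_r$ and $-D_r$ then yields \refeqn{deviation}.

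The main obstacle is the coupling step: the ``available'' pool evolves dynamically with each draw, so one must argue carefully that a single initial discrepancy does not propagate. The assumptions $\ell_t=\Theta(n)$ and $r=o(n)$ ensure that at every step the pool has size $\Theta(n)$, so a local swap can almost always be executed in a single step and the discrepancy stays of size $O(1)$ throughout; this is exactly what is needed for the bounded-differences property.
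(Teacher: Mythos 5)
Your high-level route is the same as the paper's: expose $S=\{s_1,\dots,s_r\}$ sequentially, form the Doob martingale $D_j=\E[h_t\mid s_1,\dots,s_j]$, show bounded differences $|D_{j+1}-D_j|=O(1)$, and apply Azuma. Where you diverge is in \emph{how} the bounded-differences step is established. The paper avoids any coupling: fixing the column $Q$ of the first revealed dot, it writes $\E[h_t\mid X]$ as a convex combination of $\E[h_t\mid Z_i]+\delta_{i,t}$ and $\E[h_t\mid Y_i]$, where $Y_i$ (resp.\ $Z_i$) is the uniform distribution on $r$-subsets (resp.\ $(r{-}1)$-subsets) of $H\setminus Q$, and then bounds $|\E[h_t\mid Z_i]-\E[h_t\mid Y_i]|=O(1)$ by an exact counting identity $K'_h=\gamma_h K_{h-1}$ with $\gamma_h=\frac{(\ell'_t-h+1)m_t}{h}$ monotonically decreasing in $h$. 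This is a short, closed-form calculation that makes no use of $m_t$ being bounded.

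Your coupling approach can be made to work, but two of the specific claims need repair. First, after discarding $s$ vs.\ $s'$ the two pools do not ``differ by at most two dots'': they differ by up to two \emph{columns}, i.e.\ $m_{i}+m_{i'}$ dots (and the two pools may have different sizes if $m_i\ne m_{i'}$). This is $O(1)$ only under the additional assumption $\max_t m_t=O(1)$, which holds in the paper's application but is not stated in the proposition; the paper's counting argument does not need it. Second, ``the discrepancy stays of size $O(1)$ throughout'' is not quite what a greedy coupling gives you: whenever the two chains are forced to draw from distinct columns, the symmetric difference of blocked-column sets can grow by two. What one actually proves is that such mismatches occur with probability $O(1/n)$ per step (the TV distance between the two one-step draws), so over $r=o(n)$ steps the \emph{expected} final symmetric difference is $O(1)$, hence $|D_{j+1}-D_j|\le\E|S^s\triangle S^{s'}|=O(1)$. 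That is enough for Azuma, but it is a probabilistic, not pathwise, control and needs to be spelled out; as written the argument elides exactly the point it flags as ``the main obstacle.'' The paper's exchange-of-counting argument is both shorter and sidesteps these issues.
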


\pfstart
This is a standard application of Azuma's inequality (\refthm{azuma}). Suppose, we sort the elements of each $S\in X$ in any order: $S=\{s_1,\dots,s_r\}$. Clearly, the probability equals for unsorted and for sorted lists. We use both interchangeably in the proof.

Let $D_i$ be the Doob martingale with respect to this sequence. We have to prove that $|D_{i}-D_{i-1}|=O(1)$, i.e., the expectation of $h_t$ does not change much when a new element of the sequence is revealed. To simplify notations, we prove only $|D_1-D_0|=O(1)$, the remaining inequalities being similar.

For the proof, we define two other classes of probability distributions, all being uniform:
\begin{itemize}
\item $Y_i$: $r$-subsets of $H\setminus Q$, where $Q$ is a fixed column of $H_i$;
\item $Z_i$: $(r-1)$-subsets of $H\setminus Q$.
\end{itemize}
For the martingale, it is enough to prove that, for all $i$:
\[
\left| \E[h_t\mid X] - \s{\E[h_t\mid Z_i]+\delta_{i,t}}\right|=O(1),
\]
where $\delta_{i,t}$ is the Kronecker delta. We have
\begin{equation}
\label{eqn:devX}
\E[h_t\mid X] = \pr[Q\cap S\ne\emptyset]\s{\E[h_t\mid Z_i]+\delta_{i,t}} + \pr[Q\cap S = \emptyset]\E[h_t\mid Y_i].
\end{equation}
Hence, it is enough to prove that 
\begin{equation}
\label{eqn:devOutside}
|\E[h_t\mid Z_i] - \E[h_t\mid Y_i]|=O(1).
\end{equation}
Denote $\ell_j' = \ell_j - \delta_{i,j}$; and let 
$K_h$ and $K'_h$ be the number of elements $S$ of $Z_i$ and $Y_i$, respectively, having $h_t(S)=h$. Note that
$K'_h = \gamma_h K_{h-1}$, where $h\in[r]$ and $\gamma_h = ((\ell'_t - h +1) m_t)/h$. 
Thus
\[
\E[h_t\mid Y_i] \le \frac{\sum_{h=1}^{r} h K_h'}{\sum_{h=1}^{r} K_h'} = \frac{\sum_{h=1}^{r} h\gamma_h K_{h-1}}{\sum_{h=1}^{r} \gamma_h K_{h-1}} \le \frac{\sum_{h=1}^{r} h K_{h-1}}{\sum_{h=1}^{r} K_{h-1}} = 1 + \E[h_t\mid Z_i],
\]
where the second inequality holds because $\gamma_h$ monotonely decreases. Thus, by linearity of expectation, $\E[h_t\mid Y_i]\ge \E[h_t\mid Z_i]-k+1$, for all $i$, thus proving~\refeqn{devOutside}. An application of Azuma's inequality finishes the proof of the proposition.
\pfend

\begin{lem}
\label{lem:integrals}
Assume $i,j\in[k-1]$, at least one of them is not 1, and $m=O(1)$ is an integer. Let $\mu$ be a probability distribution on $\R^m$ such that $\mu(\R^m\setminus \cB(\lambda))\le e^{-C_1 \lambda^2/r_i}$ for any $\lambda \ge C_2\sqrt{r_i}$. Assume $w$ is a positive real function, defined on the support of $\mu$, such that $w(x)/w(y) \le e^{C_3 r_j\|x-y\|_\infty/n}$ for any $x,y$. Here $C_1,C_2,C_3$ are some positive constants. Then there exists a constant $C>0$ such that
\[\int_{\R^m\setminus \cB(\lambda)} w(x)\, d\mu(x) = e^{-\Omega(\lambda^2/r_1)}\int_{\R^m} w(x)\,d\mu(x)\]
for any $\lambda \ge C\sqrt{r_1}$.
\end{lem}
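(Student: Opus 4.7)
The plan is to bound the numerator and denominator of the ratio separately using a common reference point $x_0$ in the support of $\mu$.

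First, pick $x_0$ with $\|x_0\|_\infty\le C_2\sqrt{r_i}$, which exists because the tail hypothesis at $\lambda=C_2\sqrt{r_i}$ gives $\mu(\cB(C_2\sqrt{r_i}))\ge 1-e^{-C_1C_2^2}>0$. The log-Lipschitz bound on $w$ then pins $w(x)$ between $w(x_0)e^{\pm C_3r_j\|x-x_0\|_\infty/n}$ throughout the support. For the denominator I restrict the integral to $\cB(x_0,2C_2\sqrt{r_i})\supseteq \cB(C_2\sqrt{r_i})$, obtaining
\[\int w\,d\mu\ge w(x_0)\,e^{-2C_2C_3r_j\sqrt{r_i}/n}\,(1-e^{-C_1C_2^2}) = w(x_0)\,e^{-O(r_j\sqrt{r_i}/n)}\,\Omega(1).\]

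For the numerator, I use $w(x)\le w(x_0)e^{C_3r_j(\|x\|_\infty+\|x_0\|_\infty)/n}$ and a layer-cake (integration-by-parts) decomposition with $F(s)=\mu(\|x\|_\infty>s)$, reducing the problem to bounding
\[e^{C_3r_j\lambda/n}F(\lambda)+\frac{C_3r_j}{n}\int_\lambda^\infty e^{C_3r_js/n}F(s)\,ds.\]
Plugging in $F(s)\le e^{-C_1s^2/r_i}$ and completing the square exposes the ``shifted mean'' $A=C_3r_jr_i/(2C_1n)$. For $s\ge\lambda\ge 4A$ the quadratic term dominates the linear one, the exponent is at most $-\Omega(\lambda^2/r_i)$, and the standard Gaussian tail estimate gives $\int_{\|x\|_\infty>\lambda} w\,d\mu\le w(x_0)\,e^{O(r_j\sqrt{r_i}/n)}\,e^{-\Omega(\lambda^2/r_i)}$. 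The condition $\lambda\ge 4A$ is secured by taking $C$ large, since $A=O(r_ir_j/n)=o(\sqrt{r_1})$ in all allowed cases (as checked below).

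Combining gives the ratio bound $e^{O(r_j\sqrt{r_i}/n)-\Omega(\lambda^2/r_i)}$. Using $r_i\le r_1$, I split $\lambda^2/r_i\ge \lambda^2/(2r_1)+\lambda^2/(2r_i)$, which isolates the target factor $e^{-\Omega(\lambda^2/r_1)}$ and leaves the task of absorbing $O(r_j\sqrt{r_i}/n)$ into $\Omega(\lambda^2/r_i)$. For $\lambda\ge C\sqrt{r_1}$ with $C$ a large constant, this reduces to the numerical inequality $r_jr_i^{3/2}=O(nr_1)$; verifying it is the main obstacle, and is precisely where the hypothesis that at least one of $i,j$ differs from $1$ enters. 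If $j\ne 1$, then $r_j\le r_2$ and $r_i\le r_1$, so $r_jr_i^{3/2}\le r_2r_1^{3/2}=o(nr_1)$ by $\sqrt{r_1}r_2=o(n)$ from \refprp{estimates}. If $j=1$ and $i\ne 1$, then $r_i\le r_2$ and $r_jr_i^{3/2}\le r_1r_2^{3/2}$, so it suffices that $r_2^{3/2}=o(n)$; this follows from $r_2^{3/2}=\sqrt{r_2}\cdot r_2\le \sqrt{r_1}\cdot r_2=o(n)$, again by \refprp{estimates}. In every allowed case the ratio is $e^{-\Omega(\lambda^2/r_1)}$, as required.
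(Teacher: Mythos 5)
Your argument is correct and follows essentially the same route as the paper: you fix a reference point of small norm to normalize $w$, lower-bound the denominator on the core ball paying a factor $e^{-O(r_j\sqrt{r_i}/n)}$, and upper-bound the numerator by a one-dimensional Gaussian-tail integral whose parameters are controlled via $\sqrt{r_1}r_2=o(n)$ and $\sqrt{r_1}=o(r_i)$ from Proposition~\ref{prp:estimates}. Your layer-cake decomposition with $F(s)=\mu(\|x\|_\infty>s)$ plays exactly the role of the paper's worst-case measure $\nu$ with density $g(t)$, your completion of the square matches its substitution $\tilde t=t/\sqrt{r_1}$, and your case analysis on $(i,j)$ simply spells out the coefficient bounds $r_i^{3/2}r_j/(nr_1)=O(1)$ and $r_ir_j/(n\sqrt{r_1})=O(1)$ that the paper asserts without further comment.
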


\pfstart In the proof, $C$ with a subindex denotes a positive constant that may depend on other $C$'s.

Let $\nu$ be a measure on $]C_2\sqrt{r_i}, +\infty[$ such that $\nu(]\lambda,+\infty[) = \mu(\R^m\setminus \cB(\lambda))$. The worst case, when the mass of $\mu$ is as far from the origin as possible, is when $\nu(]\lambda,+\infty[) = e^{-C_1\lambda^2/r_i}$. In this case, $\nu(t) = g(t)\, dt$ with $g(t)=\frac{2C_1 t}{r_i} e^{-C_1t^2/r_i}$.

There exists a point $y$ in the support of $\mu$ such that $\|y\|_\infty \le C_2\sqrt{r_i}$. Without loss of generality, we may assume $w(y)=1$. Consider
\[
D = \int_{\cB(C_2\sqrt{r_i})} w(x)\,d\mu(x) \ge \mu(\cB(C_2\sqrt{r_i}))\inf_{x\in \cB(C_2\sqrt{r_i})}w(x) \ge (1-e^{-C_1C_2^2}) e^{-2C_2C_3 r_j\sqrt{r_i}/n}.
\]
Then, for any $\lambda\ge C_2\sqrt{r_i}$,
\begin{equation}
\label{eqn:integrals1}
\begin{aligned}
\frac1D \int_{\R^m\setminus \cB(\lambda)} w(x)\,d\mu(x)&\le \frac1D \int_\lambda^{+\infty} e^{C_3 r_j (t+C_2\sqrt{r_i})/n} g(t)\,dt \\ &=
\int_\lambda^{+\infty} \frac{C_4 t}{r_i}\exp\s{C_5 \frac{r_j\sqrt{r_i}}{n} + C_3\frac{r_j t}{n} - C_1 \frac{t^2}{r_i} }dt.
\end{aligned}
\end{equation}
Denote $\tilde t = t/\sqrt{r_1}$. Then the expression in the last exponent can be rewritten as
\[
C_5 \frac{r_j\sqrt{r_i}}{n} + C_3\frac{r_j t}{n} - C_1 \frac{t^2}{r_i} = C_5 \frac{r_j\sqrt{r_i}}{n} + C_3\frac{r_j\sqrt{r_1}}{n} \tilde t - C_1 \frac{r_1}{r_i}{\tilde t}^2 = \frac{r_1}{r_i}\s{C_5 \frac{r_i^{3/2}r_j}{ nr_1 } + C_3\frac{r_i r_j}{ n\sqrt{r_1}} \tilde t - C_1 {\tilde t}^2}.
\]
The coefficients of the last polynomial can be estimated as follows:
\[
\frac{r_i^{3/2}r_j}{ nr_1 } \le \frac{\sqrt{r_1}r_2}{n} = O(1)\quad\mbox{and}\quad \frac{r_i r_j}{ n\sqrt{r_1}} \le \frac{\sqrt{r_1}r_2}{n} = O(1),
\]
by \refprp{estimates}. This means there exist $C_6,C_7>0$ such that, for any $\lambda\ge C_6\sqrt{r_1}$, the right hand side of~\refeqn{integrals1} is at most
\[
\int_\lambda^{+\infty} \frac{C_4 t}{r_i} e^{-C_7t^2/r_i}\,dt = \frac{C_4}{2C_7} e^{-C_7 \lambda^2/r_i}= e^{-\Omega(\lambda^2/r_1)},
\]
if $\lambda \ge C\sqrt{r_1}$ for $C$ large enough.
\pfend

\pfstart[Proof of \reflem{typical_deviation}]
Let $S$ be the random subset. Denote the set of $t$-subtuples of $S$ by $S_t$. We apply \refprp{deviation} to $S_t$ with $k-1$ $H_t$'s given by $m_s = {s\choose t}$ and $r=b_t=O(r_t)$. Thus, if $S_t$ had uniform distribution, Eq.~\refeqn{deviation} would hold, that would imply~\refeqn{typical_dev}, because there are $O(1)$ possible choices of $s$ and $t$.

But in $S$, $S_t$ does not have uniform distribution. Each $S_t$ is assigned weight $w_{S_t}$ that is proportional to the number of subsets of $A_{\ge 1}'$ having specification $(b'_t)$, where $A_{\ge1}'$ has $\ell_j-h_j(S_t)$ $j$-tuples in the notations of \refprp{deviation}, $b'_t=0$ and $b'_j=b_j$ for $j\ne t$.

Take two $S_t$ and $S_t'$, and assume $\|\tilde\beta(S_t)-\tilde\beta(S'_t)\|_\infty \le d$. We apply \reflem{Aprim}. There are two cases. If $t>1$, the lemma implies $w_{S_t}/w_{S_t'}= e^{O(dr_1/n)}$. If $t=1$ then $w_{S_t}/w_{S_t'}= e^{O(dr_2/n)}$. Anyway, either $r$ in~\refeqn{deviation}, or $r$ in the estimation of $w_{S_t}/w_{S_t'}$ is not $r_1$, and, hence, \reflem{integrals} applies, finishing the proof of the lemma.
\pfend

\paragraph{Divergence in the flow} After we have defined typical vertices, we are going to show that almost all flow goes through them. But before we do so, we show get an estimate of the divergence of the flow in the distance of the types.

\begin{lem}
\label{lem:flowTypical}
Suppose two key vertices $S$ and $S'$ of the same specification satisfy $\|\tilde\beta(S)-\tilde\beta(S')\|_\infty\le d$. Then $p_S/p_{S'} = e^{O(dr_2/n)}$.
\end{lem}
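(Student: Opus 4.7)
The plan is to first use $\Sigma$-invariance of the type and \refprp{sym_best}: $p_S$ depends only on $\tilde\beta(S)$, so write $p_S=q_{i,j}(T)$ for $S$ a key vertex at step $(i,j,1)$ with type $T$. By the definition~\refeqn{flow_arc}, a key vertex $U$ of type $T-e_{i,s}$ distributes its flow evenly among its $N_i(T-e_{i,s})$ successors, while a key vertex $V$ of type $T$ admits exactly $b_{i,s}(T)$ predecessors of type $T-e_{i,s}$, one per $i$-subtuple of $V$ contained in $A_s$. This yields within stage~$i$ the mini-round recurrence
\begin{equation*}
q_{i,j}(T)=\sum_{s\ge i}\frac{b_{i,s}(T)}{N_i(T-e_{i,s})}\,q_{i,j-1}(T-e_{i,s}),\qquad j\ge 2,
\end{equation*}
together with the stage-boundary identifications $q_{i+1,1}\equiv q_{i,r_i+1}$; the base case $q_{2,1}\equiv p_o$ is just the uniform flow produced by the first stage.

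I would then prove by induction along the sequence of mini-rounds of the preparatory phase the strengthened statement that for any two valid types $T,T'$ at a common step $(i,j,1)$ sharing the same specification and satisfying $\|T-T'\|_\infty\le d$, one has $q_{i,j}(T)/q_{i,j}(T')\le e^{D_{i,j}d/n}$, where the constants $D_{i,j}$ grow by only an absolute $O(1)$ per mini-round. Since the total number of mini-rounds in the preparatory phase is $\sum_{l\ge 2}r_l=O(r_2)$ by $r_{l+1}=o(r_l)$ (\refprp{estimates}), this gives $D_{i,j}=O(r_2)$ uniformly, which is the content of the lemma.

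For the inductive step, set $\eta_s=b_{i,s}(T)-b_{i,s}(T')$, $h(s)=q_{i,j-1}(T-e_{i,s})/N_i(T-e_{i,s})$, and $g(s)=q_{i,j-1}(T'-e_{i,s})/N_i(T'-e_{i,s})$. Substituting $b_{i,s}(T)=b_{i,s}(T')+\eta_s$ decomposes $q_{i,j}(T)$ into $\sum_sb_{i,s}(T')h(s)+\sum_s\eta_sh(s)$. The first summand is at most $e^{D_{i,j-1}d/n+O(d/n)}\,q_{i,j}(T')$, using the inductive hypothesis on $T-e_{i,s}$ versus $T'-e_{i,s}$ (which share a specification and still satisfy $\|\cdot\|_\infty\le d$) together with $|N_i(T-e_{i,s})-N_i(T'-e_{i,s})|=O(d)$, coming from the linearity of $N_i$ in the $b_{t,s}$ with coefficients $\binom{s}{i}=O(1)$ and from $N_i=\Theta(n)$. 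The second summand is the crucial one: the same-specification assumption forces $\sum_s\eta_s=0$, so $\sum_s\eta_sh(s)=\sum_s\eta_s(h(s)-h(s_0))$ for any reference index~$s_0$, and the inductive hypothesis applied with offset~$1$ shows that $h$ is slowly varying in $s$, giving $\bigl|\sum_s\eta_sh(s)\bigr|=O(dh_{\min}D_{i,j-1}/n)$. Relative to $q_{i,j}(T')\ge b_i(T')h_{\min}$ with $b_i(T')\ge 1$ for $j\ge 2$ (and the trivially inherited bound at stage boundaries), both contributions combine into an increment $D_{i,j}=D_{i,j-1}+O(1)$, closing the induction.

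The main obstacle is the $\eta_s/b_{i,s}$ factor, which threatens a multiplicative blow-up whenever some $b_{i,s}$ is small, as occurs at the start of each later stage of the preparatory phase. The cancellation $\sum_s\eta_s=0$ supplied by the same-specification hypothesis, combined with the slow variation of $h$ in $s$ inherited from the inductive hypothesis, is exactly what converts this potentially dangerous term into the additive per-mini-round increment needed for the uniform bound $D_{i,j}=O(r_2)$.
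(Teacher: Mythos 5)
The recurrence over mini-rounds and the decomposition into a ``main'' and a ``deviation'' summand are reasonable; indeed your recurrence is a correct unwinding of~\refeqn{flow_arc}, and the cancellation $\sum_s\eta_s=0$ is a real feature coming from the equal-specification hypothesis. The gap is in the claim that the induction closes with $D_{i,j}=D_{i,j-1}+O(1)$. Tracing your own estimates: $|\sum_s\eta_s h(s)|=O\bigl(\min(d,b_i(T'))\,D_{i,j-1}\,h_{\min}/n\bigr)$, and dividing by $q_{i,j}(T')=\Theta(b_i(T')h_{\min})$ gives a relative error of order $\min(d,b_i(T'))\,D_{i,j-1}/(n\,b_i(T'))$. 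Converted back into the exponent $D_{i,j}d/n$, the per-mini-round increment is $O\bigl(1+D_{i,j-1}/b_i(T')\bigr)$, not $O(1)$. Noting that $b_i(T')\ge 1$ is not enough: at the start of stage $i\ge 3$ one has $b_i(T')=c_i+(j-1)$ with $c_i=O(r_1^i/n^{i-1})$, which for the parameters of \refthm{main} is far smaller than $r_2$ (for instance at $k=4$, $c_3=O(n^{1/5})$ while $r_2=\Theta(n^{3/5})$), whereas $D_{i,1}$ has by then already accumulated $\Theta(r_2)$ from stage~2. At those mini-rounds the increment is $\Omega(r_2/b_i)$ and the $D_{i,j}$ grow multiplicatively --- the blow-up you identified as ``the main obstacle'' is only shifted from $b_{i,s}$ to $b_i$, not eliminated --- so the final constant is much worse than $O(r_2)$.

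The paper sidesteps this by comparing $p_S$ and $p_{S'}$ globally rather than step by step: it expands $p_S$ explicitly (Eq.~\refeqn{flow}) as a sum over loading orders $\sigma$ of products of $O(r_2)$ factors of the form $D_\nu(\cdots)^{-1}$, pairs the subtuples of $S$ and $S'$ so that the indicator vectors $\delta,\delta'$ disagree on only $O(d)$ indices, and then, for the \emph{same} $\sigma$ in both formulas, every factor has the form $D_\nu(z)/D_\nu(z')$ with $\|z-z'\|_\infty=O(d)$ and $D_\nu=\Theta(n)$, hence is $1+O(d/n)$. Multiplying $O(r_2)$ such factors gives $e^{O(r_2d/n)}$ per $\sigma$, and the same bound for the ratio of sums. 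This one-shot comparison has no $1/b_i$ in it and no error propagation, which is exactly what your inductive approach is missing.
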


\pfstart
Denote $(b_t)=\beta(S)=\beta(S')$, and $b=\sum_t b_t$. Let the original specification of the vertices be $(c_t)$, and $c=\sum_t c_t$.

Fix some order of subtuples in $S$ and $S'$ so that the sizes of the $i$-th subtuple in $S$ and $S'$ are equal for any $i$. Denote this common value by $\nu(i)$. Also, let $\delta_s(i)$ be 1 if the $i$-th subtuple of $S$ is contained in $A_s$, and 0 otherwise. Define $\delta'$ for $S'$ similarly.

Let $\Sigma$ be the set of possible sequences of how the subtuples could have been loaded. I.e., for each element of $\Sigma$, the first $c$ subtuples have specification $(c_t)$, and the remaining $b-c$ subtuples are in a non-decreasing order with respect to their sizes. Moreover, the order of the first $c$ subtuples is irrelevant, i.e., no two distinct elements of $\Sigma$ have their tails of last $b-c$ subtuples equal. In these notations,
\begin{equation}
\label{eqn:flow}
p_S = p_o \sum_{\sigma\in \Sigma} \prod_{j=c+1}^{b} D_{\nu(\sigma j)}\s{\sum\nolimits_{i=1}^{j-1} \delta_1(\sigma i),\dots,\sum\nolimits_{i=1}^{j-1} \delta_{k-1} (\sigma i)}^{-1},
\end{equation}
where $p_o$ and $D$ are defined in \refsec{flow}. A similar expression works for $S'$ as well, if one replaces $\delta$ by $\delta'$.

Since the distance between the types of $S$ and $S'$ is $d$, one can define the order of the subtuples so that $\delta_s(i) = \delta'_s(i)$ for all $s$'s and all, except at most $O(d)$, $i$'s. In this case, for all $\sigma, s$ and $j$:
\[ \left|\sum_{i=1}^{j-1} \delta_s(\sigma i) - \sum_{i=1}^{j-1} \delta'_s(\sigma i) \right|= O(d).\] 
Then the ratio of the $D$'s in \refeqn{flow} is at most $1+O(d/n)$. Since there are $O(r_2)$ multipliers, the ratio of the products in~\refeqn{flow} for the same $\sigma$ is at most
\[\s{1+O\s{\frac dn}}^{O(r_2)} = e^{O(r_2d/n)}.\]
And the same estimate holds for the ratio of sums. \pfend

\paragraph{Finishing the proof} Finally, we are about to prove that the statement of \refcor{symmetric} applies for the flow. Call an arc on preparatory or last stage typical if the preceding key vertex is typical and the flow through the arc is non-zero. We show that conditions of~\refeqn{almost} hold for a fixed value of $x\in f^{-1}(1)$. Then the existence of a strong equivalence between any two positive inputs, as in \refsec{convention}, implies that~\refeqn{almost} holds for all positive inputs $x$ with the values of $\pi(E)$ and $\tau(E)$ independent on $x$.

Note that the factor ${s \choose i} /\left(l{s \choose l}N(S'\setminus Q)\right)$ from~\refeqn{flow_arc} is equal for all arcs from a fixed equivalence class of the preparatory stage, up to a constant factor. Thus, the main concern is about $p_{S'\setminus Q}$, that is flow through a key vertex. The same is true for the last stage as well.

We start with the third condition of~\refeqn{almost}. It is enough to show the flow differs by at most a constant factor for any two typical key vertices of the same specification. The latter follows from the fact the types of typical vertices are at distance $O(\sqrt{r_1})$, and, hence, by \reflem{flowTypical}, the ratio of the flow is $e^{O(r_2\sqrt{r_1}/n)} = O(1)$. 

We continue with the second condition. Again, it is enough to show its analog for key vertices. The latter is a direct consequence of \reflem{integrals} applied to the estimates of Lemmas~\ref{lem:typical_deviation} and~\ref{lem:flowTypical}. The constant $C$ in the definition of the typical vertex is that from the last application of \reflem{integrals}.

Finally, let us calculate the speciality of each step. Because of \refcor{negative}, we may calculate the speciality as if the set of input variables is reduced to $A_{\ge1}\cup M$. Consider a typical arc $e$ of step $(i,j,l)$. Let $S$ be the origin of $e$. Note that $S$ is typical (this is a consequence of~\refeqn{devOutside}). If $l=1$ then we can add any element from an untouched tuple of $A_{\ge i}$. Due to~\refeqn{lk-1}, there are $\Omega(n)$ such elements. 

Now assume $l>1$. By the construction of the flow, the non-zero flow is through the arcs that load the $l$-th element for a subtuple from $A_{\ge i}$. By \reflem{typical_mean}, in $S$, there are $\Omega(r_{l-1})$ $(l-1)$-subtuples from $A_{\ge i}$. In both cases, there are $\Omega(r_{l-1})$ arcs leaving $S$ that are used by the flow. By \reflem{typical_deviation}, an $\Omega(1)$ fraction of all vertices is typical, hence, the speciality of an equivalence class of step $(i,j,l)$ is $O(n/r_{l-1})$.

For the last stage, the same argument as in \refsec{flow} applies, concluded by a fact an $\Omega(1)$ fraction of all vertices before the last stage is typical. 

Thus, the flow is almost symmetric and estimates from~\reftbl{param} are correct. This proves \refthm{main}.

\section{Summary}
An algorithm for $k$-distinctness problem is constructed in the paper, given the prior knowledge of the structure of the input. Is it true, the problem can be solved in the same number of queries without any prior knowledge?

Also, the algorithm in \refsec{previous} can be used for any function such that its 1-certificate complexity is bounded by $k$. For the algorithm in \refsec{main}, it is not clear. So, another (stronger) open problem is as follows. Is it true, any function with 1-certificate complexity bounded by constant can be calculated in $o(n^{3/4})$ quantum queries? If so, this would be a far-reaching generalization of the quantum algorithm in \cite{grafy}.

\subsection*{Acknowledgements}
AB would like to thank Andris Ambainis for useful discussions. AB has been supported by the European Social Fund within the project ``Support for Doctoral Studies at University of Latvia''.

\bibliographystyle{alpha}
\bibliography{span_distinct}

\newcommand{\etalchar}[1]{$^{#1}$}
\begin{thebibliography}{LMR{\etalchar{+}}11}

\bibitem[Amb02]{Ambainis00adversary}
A.~Ambainis.
\newblock Quantum lower bounds by quantum arguments.
\newblock {\em J. Comput. Syst. Sci.}, 64:750--767, 2002.
\newblock Earlier version in STOC'00.

\bibitem[Amb05]{distinctLower2}
A.~Ambainis.
\newblock Quantum lower bounds for collision and element distinctness with
  small range.
\newblock {\em Theory of Computing}, 1:37--46, 2005.

\bibitem[Amb07]{distinct}
A.~Ambainis.
\newblock Quantum walk algorithm for element distinctness.
\newblock {\em SIAM Journal on Computing}, 37:210--239, 2007.

\bibitem[AS04]{distinctLower1}
S.~Aaronson and Y.~Shi.
\newblock Quantum lower bounds for the collision and the element distinctness
  problems.
\newblock {\em Journal of the ACM}, 51(4):595--605, 2004.

\bibitem[AS08]{probabilistic}
N.~Alon and J.H. Spencer.
\newblock {\em The probabilistic method}.
\newblock Wiley-Interscience series in discrete mathematics and optimization.
  Wiley, 2008.

\bibitem[BdW02]{survey}
H.~Buhrman and R.~de~Wolf.
\newblock Complexity measures and decision tree complexity: a survey.
\newblock {\em Theor. Comput. Sci.}, 288:21--43, October 2002.

\bibitem[Bel11a]{spanRank}
A.~Belovs.
\newblock Span-program-based quantum algorithm for the rank problem.
\newblock Technical Report arXiv:1103.0842, arXiv, 2011.

\bibitem[Bel11b]{spanCert}
A.~Belovs.
\newblock Span programs for functions with constant-sized 1-certificates.
\newblock Technical Report arXiv:1105.4024, arXiv, 2011.

\bibitem[CK11]{grafy}
A.~Childs and R.~Kothari.
\newblock {Quantum query complexity of minor-closed graph properties}.
\newblock In {\em Proc. 28th STACS}, pages 661--672, 2011.

\bibitem[HL{\v S}07]{HoyerLeeSpalek07negativeadv}
P.~H{\o}yer, T.~Lee, and R.~{\v S}palek.
\newblock Negative weights make adversaries stronger.
\newblock In {\em Proc. 39th ACM STOC}, pages 526--535, 2007.

\bibitem[LMR{\etalchar{+}}11]{LMRS11}
T.~Lee, R.~Mittal, B.~Reichardt, R.~{\v{S}}palek, and M.~Szegedy.
\newblock Quantum query complexity of the state conversion problem.
\newblock In {\em Proc. 52nd IEEE FOCS}, 2011.

\bibitem[Rei11]{Reichardt10advtight}
B.~Reichardt.
\newblock Reflections for quantum query algorithms.
\newblock In {\em Proc. 22nd ACM-SIAM Symp. on Discrete Algorithms (SODA)},
  pages 560--569, 2011.

\bibitem[R{\v{S}}]{formulae}
B.~Reichardt and R.~{\v{S}}palek.
\newblock Span-program-based quantum algorithm for evaluating formulas.
\newblock In {\em Proc. 40th ACM STOC}, pages 103---112.

\end{thebibliography}

\end{document}